\newcommand{\ALOOP}[1]{\ALC@it\algorithmicloop\ #1%
  \begin{ALC@loop}}
\newcommand{\ENDALOOP}{\end{ALC@loop}\ALC@it\algorithmicendloop}
\newtheorem{theorem}{\textbf{\emph{Theorem}}}
\newtheorem{definition}{\textbf{\emph{Definition}}}
\newcommand{\graph}{graph-structured data}
\newcommand{\secgnn}{SecGNN}
\newcommand{\main}{SecGNN}
\begin{document}

\title{SecGNN: Privacy-Preserving Graph Neural Network Training and Inference as a Cloud Service}

\author{Songlei~Wang, Yifeng~Zheng, and Xiaohua~Jia,~\IEEEmembership{Fellow,~IEEE}
	
	\IEEEcompsocitemizethanks{
		\IEEEcompsocthanksitem Songlei Wang and Yifeng Zheng are with the School of Computer Science and Technology, Harbin Institute of Technology, Shenzhen, Guangdong 518055, China (e-mail: songlei.wang@outlook.com; yifeng.zheng@hit.edu.cn).
		\IEEEcompsocthanksitem Xiaohua Jia is with the School of Computer Science and Technology, Harbin Institute of Technology, Shenzhen, Guangdong 518055, China, and also with the Department of Computer Science, City University of Hong Kong, Hong Kong, China (e-mail: csjia@cityu.edu.hk)
		
	    \IEEEcompsocthanksitem Corresponding author: Yifeng Zheng.
		
	}
}

\IEEEtitleabstractindextext{
\begin{abstract}
Graphs are widely used to model the complex relationships among entities. As a powerful tool for graph analytics, graph neural networks (GNNs) have recently gained wide attention due to its end-to-end processing capabilities. With the proliferation of cloud computing, it is increasingly popular to deploy the services of complex and resource-intensive model training and inference in the cloud due to its prominent benefits. However, GNN training and inference services, if deployed in the cloud, will raise critical privacy concerns about the information-rich and proprietary graph data (and the resulting model). While there has been some work on secure neural network training and inference, they all focus on convolutional neural networks handling images and text rather than complex graph data with rich structural information. In this paper, we design, implement, and evaluate SecGNN, the first system supporting privacy-preserving GNN training and inference services in the cloud. SecGNN is built from a synergy of insights on lightweight cryptography and machine learning techniques. We deeply examine the procedure of GNN training and inference, and devise a series of corresponding secure customized protocols to support the holistic computation. Extensive experiments demonstrate that SecGNN achieves comparable plaintext training and inference accuracy, with promising performance.
\end{abstract}

\begin{IEEEkeywords}
Graph neural networks, cloud computing services, model training and inference services, privacy preservation
\end{IEEEkeywords}}

 \maketitle

\IEEEdisplaynontitleabstractindextext

\IEEEpeerreviewmaketitle

\section{Introduction}

\IEEEPARstart{G}raphs have been widely used to model and manage data in various real-world applications, including recommendation systems \cite{zhang2019deep}, social networks \cite{kim2018social} and webpage networks \cite{scarselli2008graph}.
Graph data, however, is highly complex and inherently sparse, making graph analytics challenging \cite{wu2021learning}.
With the rapid advancements in deep learning, Graph Neural Networks (GNNs) \cite{wu2020comprehensive} have recently gained a lot of traction as a powerful tool for graph analytics due to its end-to-end processing capabilities.
GNNs can empower a variety of graph-centric applications such as node classification \cite{ying2018graph}, edge classification \cite{kim2019edge} and link prediction \cite{zhang2018link}.
With the widespread adoption of cloud computing, it is increasingly popular to deploy machine learning training and inference services in the cloud \cite{aws,azure}, due to the well-understood benefits \cite{QinW0018,JiangWHWLSR21}.
However, GNN training and inference, if deployed in the public cloud, will raise critical severe privacy concerns.
Graph data is information-rich and can reveal a considerable amount of sensitive information.
For example, in a social network graph, the connections between nodes represent users' circles of friends and each node's features represent each user's preferences. 
Meanwhile, the graph data as well as the trained GNN model are the proprietary to the data owner, so revealing them may easily harm the business model.
Therefore, security must be embedded in outsourcing GNN training and inference to the cloud.

In the literature, privacy-preserving machine learning has received great attention in recent years, especially the design of secure protocols for neural network-based applications. A number of research efforts have been proposed for secure neural network inference and training.
Most of existing works \cite{gilad2016cryptonets,liu2017oblivious,juvekar2018gazelle,riazi2019xonn,chaudhari2019astra,patra2020blaze,mishra2020delphi,kumar2020cryptflow,rathee2020cryptflow2} are focused on designing specialized protocols for secure inference, and only a few works \cite{mohassel2017secureml,mohassel2018aby3, wagh2019securenn,wagh2020falcon, tan2021cryptgpu} study secure training which is more sophisticated and resource-intensive.
However, prior works are all focused on the support for Convolutional Neural Networks (CNNs) handling unstructured data like images and text.
How to achieve secure in-the-cloud training and inference of GNNs that handle complex graph data remains unexplored.

Supporting secure training and inference of GNNs in the cloud, however, faces unique challenges and require delicate treatments due to the complex structured nature of graphs.
There are various kinds of structural information in graphs: 1) relationships between nodes (i.e., edges), 2) edge weights, and 3) number of neighboring nodes (i.e., degrees of nodes).
Designing solutions for securing GNN training and inference thus demands protection for not only numerical information (e.g., the values of features associated with nodes) but also the rich structural information unique to different graphs.

In light of the above, in this paper, we present the first research endeavor towards privacy-preserving training and inference of GNNs in the cloud.
We design, implement, and evaluate a new system SecGNN, which allows a data owner to send encrypted graph data to the cloud, which can then effectively train a GNN model without seeing the graph data as well as provide secure inference once an encrypted GNN model is trained.
Targeting privacy assurance as well as high efficiency, SecGNN builds on only lightweight cryptographic techniques (mainly additive secret sharing) for efficient graph data encryption at the data owner as well as secure training and inference at the cloud side.
To be compatible with the working paradigm of additive secret sharing, SecGNN employs a multi-server and decentralized trust setting where the power of the cloud is split into three cloud servers that are hosted by independent cloud service providers.
The adoption of such a multi-server model to facilitate security applications in various contexts has gained increasing traction in prior works \cite{wagh2019securenn,MohasselRR20,tan2021cryptgpu} as well as in industry \cite{Mozilla,WhitePaper}. 
SecGNN leverages the above trend and contributes a new design point of secure GNN training and inference in the cloud through highly customized cryptographic protocols.

We start with considering how to appropriately encrypt the graph data in SecGNN so that it can still be effectively used at the cloud for secure training and inference.
As mentioned above, graphs contain not only numerical information (i.e., feature vectors associated with the nodes) but also structural information connecting the nodes, all demanding strong protection.
The challenge here is to how to encrypt the structural information in an effective and efficient manner.
One may try to directly encrypt the adjacency matrix of the graph of size $O(N^2)$, where $N$ is the number of nodes, with additive secret sharing.

Such a simple method, however, is neither efficient nor necessary.
Firstly, there can be tens of thousands or even millions of nodes in a graph for practical applications, leading to the adjacency matrix being of very large size.
Directly encrypting the adjacency matrix would incur significant overheads.
Secondly, graphs are usually sparse, leading to the adjacency matrix being sparse and filled with many zeros.
Encrypting all the zeros in the adjacency matrix would result in unnecessary cost as well.
To tackle this challenge, our insight is to devise a set of customized data structures to appropriately store and represent the structural information, and so the encryption is performed over these data structures rather than the original (big) adjacency matrix.
With our customized data structures, the complexity of encryption significantly reduces to $O(N\cdot d_{max})$, where $d_{max}$ is the maximum degree of nodes in the graph, and far less than the number of nodes (e.g., only $0.87\%$ to $6.2\%$ as practically observed in our experiments over several popular real-world graph datasets).

Subsequently, we consider how to securely perform training and inference at the cloud over the delicately-encrypted graph data in SecGNN.
Through an in-depth examination on the computation required in GNN training and inference, we decompose the holistic computation into a series of functions and devise corresponding tailored secure constructions with the lightweight additive secret sharing technique.
Specifically, we manage to decompose the whole procedure into secure feature normalization, secure neighboring states aggregation, secure activation functions, and secure model convergence evaluation.

SecGNN supports secure feature normalization through realizing secure division with effective approximation mechanisms.
For secure neighboring states aggregation in SecGNN, our insight is to transform the problem into secure array access over encrypted arrays and indexes.
We design a secure array access protocol building on the state-of-the-art yet achieving much improved efficiency, through customized mechanisms.
To support the secure evaluation of activation functions (ReLU and Softmax), SecGNN mainly leverages insights from digital circuit design and provides tailored protocols in the secret sharing domain, rather than relying on expensive garbled circuits as in prior work \cite{mohassel2017secureml}.

Last but not least, SecGNN provides the first mechanism for secure convergence evaluation, allowing fine-grained control on the secure training process.
This is in substantial contrast to prior work on secure (CNN) training which simply sets a fixed number for the training epochs and thus may not necessarily meet convergence.
The synergy of these customized secure and efficient components leads to SecGNN, the first system supporting secure GNN training and inference in the cloud.
The security of SecGNN is formally analyzed.
We implement SecGNN and conduct extensive experiments over multiple real-world graph datasets.
The evaluation results demonstrate that SecGNN, while providing privacy protection in training and inference, achieves comparable plaintext accuracy, with promising performance.

We highlight our contributions below:
\begin{itemize}
	
	\item We present  {\secgnn}, the \textit{first} system supporting privacy-preserving GNN training and inference as a cloud service, through a delicate synergy of lightweight cryptography and machine learning.
	
	\item We devise customized data structures to facilitate efficient and effective graph data encryption, and thoroughly propose a series of customized secure protocols to support the essential components required by secure GNN training and inference. 
	
	\item Among others, notably SecGNN provides a secure array access protocol with much improved efficiency over the state-of-the-art as well as the first secure fine-grained convergence evaluation protocol, which can be of independent interests.

	\item We make a full-fledged implementation of SecGNN and conduct an extensive evaluation over a variety of real-world graph datasets. The experiment results demonstrate the performance efficiency of SecGNN. 
	
\end{itemize}

The rest of this paper is organized as follows.
Section \ref{sec:related-work} discusses the related work.
Section \ref{sec:pre} introduces preliminaries.
Section \ref{sec:problem-statement} presents the problem statement. 
Section \ref{sec:main} gives the design of SecGNN. 
The security analysis is presented in Section \ref{sec:security-analysis}, followed by the experiments in Section \ref{sec:experiments}. Finally, we conclude this paper in Section \ref{sec:conclusion}.

\vspace{-10pt}
\section{Related Work}
\label{sec:related-work}

\subsection{Graph Neural Networks in Plaintext Domain}

Graphs can characterize the complex inter-dependency among data and are widely used in many applications, such as citation networks, social media networks, webpage networks \cite{wu2021learning}.
GNN models have the strong ability of capturing the dependence of graphs through message passing between the nodes of graphs, and have shown impressive performance in graph processing tasks.
The first GNN model was proposed in the seminal work of Scarselli et al. \cite{scarselli2008graph}.
Since then, many advanced GNN models targeting different applications and with varying capabilities have been put forward.
In general, GNN models can be divided into three categories: Gated Graph Neural Networks (GGNN) \cite{GGNN}, Graph Convolutional Networks(GCN) \cite{GCN}, and Graph ATtention networks (GAT) \cite{GAT}. 
GGNN models are proposed to accommodate applications that require to output sequences about a graph such as drug discovery \cite{chen2018rise}. 
GCN models are variants of CNNs which operate directly on graphs, and it is typically used for graphs with relatively stable nodes such as recommendation systems \cite{ying2018graph}.
GAT models introduce an attention-based architecture to calculate the weight of neighboring nodes, so that the whole network information can be obtained without knowing the structure of the whole graph, which is also commonly used in recommendation systems \cite{wang2019exploring}.
Although the above GNN models can achieve excellent performance on {\graph}, they are trained and work in the plaintext domain without considering privacy protection.
\vspace{-10pt}
\subsection{Secure Neural Network Training and Inference}

There has been a surge of interests on developing methods for secure neural network training and inference in recent years.
Most of existing works \cite{gilad2016cryptonets,liu2017oblivious,juvekar2018gazelle,riazi2019xonn,chaudhari2019astra,patra2020blaze,mishra2020delphi,kumar2020cryptflow,rathee2020cryptflow2} are focused on secure inference, and operated under different settings.
Some works \cite{gilad2016cryptonets,liu2017oblivious,juvekar2018gazelle,riazi2019xonn,mishra2020delphi,rathee2020cryptflow2} consider a 2-party setting where a model owner and a client directly engage in tailored cryptographic protocols for secure inference. Their security goal is that through the interactions the model owner learns no information while the client only learns the inference result.
In contrast, some works \cite{chaudhari2019astra,patra2020blaze,kumar2020cryptflow} consider an outsourced setting where a set of cloud servers are employed to perform secure inference over encrypted neural networks and inputs. Throughout the procedure, the cloud servers learn no information about the models, inputs, and inference results.
The cryptographic techniques adopted by the above works in different settings usually include homomorphic encryption, garbled circuits, and secret sharing.
In comparison with secret sharing, homomorphic encryption and garbled circuits are relatively expensive and usually incur large performance overheads.

In contrast with secure inference, secure training of neural networks is much more challenging because more complex operations would be required, and a large dataset needs to be processed in the ciphertext domain.
In the literature, only a few works study the problem of secure neural network training.
Mohassel et al. \cite{mohassel2017secureml} propose the first secure training method for shallow neural networks under a two-server setting, based on secret sharing (for linear operations) and garbled circuits (for approximated activation functions).
Subsequently, several works \cite{mohassel2018aby3, wagh2019securenn,wagh2020falcon, tan2021cryptgpu} achieve better performance in accuracy and efficiency by devising customized secure training protocols in a three-server setting.
Despite being useful, existing works on secure neural network are focused on CNN models that do not support the processing of graph data.
In light of this gap, in this paper we present the first research endeavor towards privacy-preserving training and inference of GNNs outsourced to the cloud, providing techniques for adequately encrypting graph data and securely supporting the essential operations required in GNN training and inference.
Following the trend as in prior work, our design adopts a similar three-server architecture, and only make use of lightweight cryptographic techniques in devising our secure protocol highly customized for secure GNN training and inference.

\subsection{Federated Learning-Based Private GNN Training}

There are some works \cite{meng2021cross,wu2021fedgnn,chen2021fedgraph,chen2022vertically,jiang2022federated,pei2021decentralized} focusing on privacy-preserving training of GNNs under the federated learning paradigm, where GNNs are trained across multiple clients holding local graph datasets in such a way that the graph datasets stay local.
Specifically, the work \cite{meng2021cross} focuses on GNNs over decentralized spatio-temporal data, and has the clients exchange model updates with the central server in plaintext.
In contrast, the works \cite{wu2021fedgnn,chen2021fedgraph} focus on distributed graph datasets where each client only holds a subgraph and design privacy-preserving mechanisms to protect the individual model updates.
Different from \cite{wu2021fedgnn,chen2021fedgraph}, the work \cite{chen2022vertically} focuses on vertically federated GNN, where all clients hold the same graph nodes, but different node features and edges.
The work \cite{jiang2022federated} considers federated dynamic GNN, which learns the representations of the objects at each timestamp by capturing the structural and patterns in the dynamic graph sequence.
Pei \textit{et al.} \cite{pei2021decentralized} focus on decentralized federated GNN, which allows multiple clients to train a GNN model without a centralized server and introduces the Diffie-Hellman key exchange method \cite{bonawitz2017practical} to achieve secure model aggregation between clients.
These federated learning-based works all target system models that are substantially different from ours.
{\main} targets an outsourced setting where the graph data owner can send its encrypted graph data to the cloud for secure training and can simply offline offline during the training process. In the meantime, {\main} readily supports secure GNN inference over encrypted GNNs and inputs as well, while those works can only deal with private training.

\subsection{Other Related Work}

There are some other works \cite{sajadmanesh2021locally,miao2022p} focusing on making the node features and edges differentially private \cite{dwork2006differential} when clients share their graph data to the central server or other clients during GNN models training.
Specifically, the work \cite{sajadmanesh2021locally} considers that a server holds a graph, whose nodes, which correspond to real users, have some private features that the server wishes to utilize for training a GNN model on the graph.
The work \cite{miao2022p} considers a distributed scenario where each client has all nodes but only partial private edges for a graph, and the clients wish to collaboratively train a GNN model on the distributed graph.
These works \cite{sajadmanesh2021locally,miao2022p} protect graph data privacy at the cost of notable accuracy degradation and rely on delicate parameter tuning for balancing accuracy and privacy.
In independent work, Wang \textit{et al.} \cite{wang2021privacy} propose a privacy-preserving representation learning framework on graphs from the mutual information perspective.
The framework considers a \emph{centralized} GNN training scenario and focuses on preventing the trained GNN models from leaking the training data by bounding the node features, node label, and link status during training GNN models.

\section{Preliminaries}
\label{sec:pre}

\subsection{Graph Neural Networks}
\label{sec:plainGNN}

A graph $\mathcal{G}=(\mathcal{V},\mathcal{E})$ consists of \textit{nodes} $\mathcal{V}$ and connections between nodes, i.e., \textit{edges} $\mathcal{E}$. Two nodes connected by an edge are \textit{neighboring nodes}.
The neighboring nodes of each node $v_{i}\in \mathcal{V}$ is denoted by $\{ne_{i,j}\}_{j\in{1,d_{i}}}$, where $d_{i}$ is node $v_{i}$'s \textit{degree}.
GNNs deal with graph-structured data, where each node in the graph is associated with a \textit{feature} vector and some of the nodes are \textit{labeled nodes}, each of which carries a \textit{classification label}.
Formally, we define the {\graph} in GNNs as $\mathcal{D}=\{\mathbf{A}, \mathbf{F},\mathbf{T}\}$.
Here, $\mathbf{A}$ is the \textit{adjacency matrix} of the graph, where $\mathbf{A}_{i,j}$ is an element in $\mathbf{A}$: If there exists an edge between node $v_{i}$ and node $v_{j}$, then $\mathbf{A}_{i,j}=1$ (binary graph) or $\mathbf{A}_{i,j}=w_{i,j}$ (weighted graph), and otherwise $\mathbf{A}_{i,j}=0$. In addition, Each row of $\mathbf{F}$  (denoted as $\mathbf{F}_{v_{i}}$) is node $v_{i}$'s feature vector, and each row of $\mathbf{T}$ (denoted as $\mathbf{T}_{v_{i}}$) is the classification label vector (one-hot encoding \cite{rodriguez2018beyond}) of labeled node $v_{i}\in\mathcal{T}$, where $\mathcal{T}$ is the set of labeled nodes.

Utilizing the {\graph} $\mathcal{D}$, a GNN model can be trained to perform graph analytic tasks.
In this paper, we focus on GCN as the first instantiation, which is well-established and the most representative GNN model \cite{GCN}.
With a trained GCN model, the classification labels of the unlabeled nodes can be inferred.
At a high level, this proceeds as follows.
Given an unlabeled node $v_{i}$, the trained GCN model infers its \textit{state} vector $\mathbf{x}^{(k)}_{v_{i}}$ (row vector) in the $k_{th}$-layer of the GCN. 
The dimension of the state vector decreases along with the layer propagation in the GCN.
The last layer state vector $\mathbf{x}^{(K)}_{v_{i}}$ is the inference result of node $v_{i}$, which is usually a probability vector with length $C$ and $C$ is the number of possible classification labels. Finally, the node $v_{i}$ is labeled with the class having the maximum probability.

\begin{figure}
	\centering
	\includegraphics[width=0.8\linewidth]{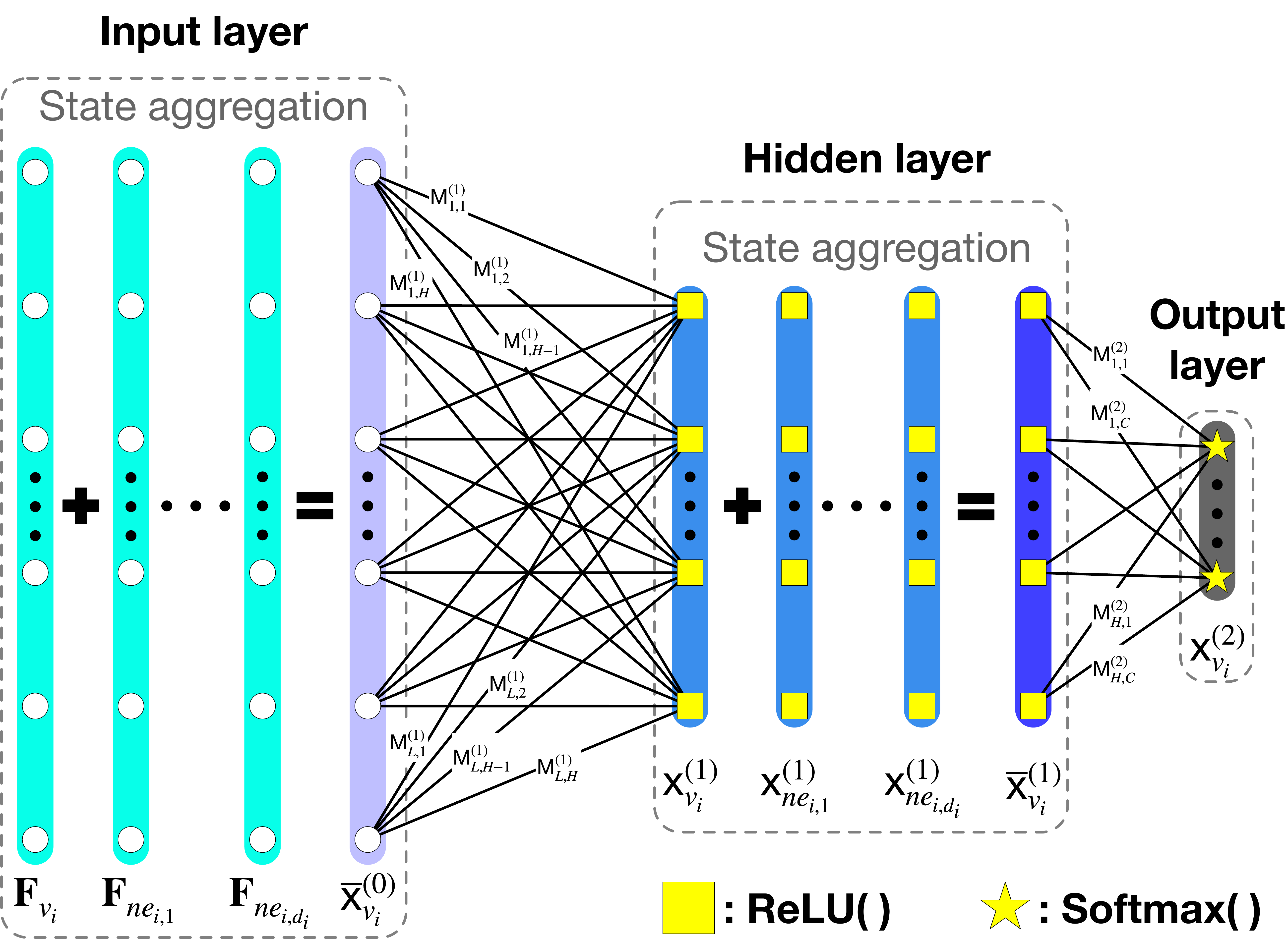}
	\caption{Illustration of the two-layer GCN model.}
	\label{fig:viewgcn}
	\vspace{-18pt}
\end{figure}

Without loss of generality and to facilitate the presentation, we elaborate on a representative two-layer GCN model \cite{GCN} as follows, and will use it to illustrate the design of our {\secgnn} afterwards. The GCN's propagation model is:
\begin{equation}
	\label{eq:GCN}
	\mathbf{Z}=\mathrm{Softmax}(\hat{\mathbf{A}}\mathrm{ReLU}(\hat{\mathbf{A}}\mathbf{F}\mathbf{M}^{(1)})\mathbf{M}^{(2)}),
\end{equation}
where $\mathbf{M}^{(1)}$ and $\mathbf{M}^{(2)}$ are two trainable weight matrices. $\hat{\mathbf{A}}$ is a symmetric normalized adjacency matrix $\hat{\mathbf{A}}=\tilde{\mathbf{D}}^{-\frac{1}{2}}\tilde{\mathbf{A}}\tilde{\mathbf{D}}^{-\frac{1}{2}}$
where $\tilde{\mathbf{A}}=\mathbf{A}+\mathbf{I}$ is the adjacency matrix of the graph with self-connection added ($\mathbf{I}$ is the identity matrix). $\tilde{\mathbf{D}}$ is a diagonal matrix:
\begin{equation}
	\label{eq:degree_matrix}
	\tilde{\mathbf{D}}_{i,i}=sw_{v_{i}}=\sum_{j\in[1,N]}\tilde{\mathbf{A}}_{i,j}=1+\sum_{j\in[1,d_{i}]}w_{i,j},
\end{equation}
where $N$ is the number of nodes in the graph, $d_{i}$ is the degree of node $v_{i}$ and $sw_{v_{i}}$ is the sum of $v_{i}$'s edge weights. Namely, $\tilde{\mathbf{D}}_{i,i}$ is the sum of node $v_{i}$'s edge weights with self-connection added (i.e, $w_{i,i}=1$). 
In particular, for a binary graph we have $\tilde{\mathbf{D}}_{i,i}=d_{i}+1$.
The activation function $\mathrm{ReLU}(x)$ is defined as \cite{nair2010rectified}:
\begin{equation}
	\mathrm{ReLU}(x)=\begin{cases}
		x & \text{if } x \ge 0,\\
		0 & \text{if } x < 0,\\
	\end{cases}
\end{equation}
and the activation function $\mathrm{Softmax}(\mathbf{x})$ is defined as \cite{liu2016large}:
\begin{equation}
	\label{eq:Softmax}
	z_{i}=\frac{e^{x_{i}}}{\sum_{j\in[1,C]}e^{x_{j}} }, i\in[1,C],
\end{equation}
where $C$ is the number of possible classification labels.

To \textit{train} the GCN model, the \textit{forward propagation} (i.e., Eq. \ref{eq:GCN}) is performed for each labeled node, and then the two trainable weight matrices $\mathbf{M}^{(1)}$ and $\mathbf{M}^{(2)}$ are updated based on the difference between each labeled node's inference result and label vector through backward propagation. Fig. \ref{fig:viewgcn} illustrates the process of performing the forward propagation for a node $v_{i}$:
\begin{enumerate}
	\item The $0_{th}$-layer \textit{aggregate state} $\overline{\mathbf{x}}_{v_{i}}^{(0)}$ of node $v_{i}$ is the weighted sum of the states of its neighbors and its own:
	\begin{equation}
		\label{eq:aggre1}
		\overline{\mathbf{x}}_{v_{i}}^{(0)}=(\hat{\mathbf{A}}\mathbf{F})_{v_{i}}=\hat{\mathbf{A}}_{v_{i},v_{i}}\mathbf{F}_{_{v_{i}}}+\sum_{j\in[1,d_{i}]}\hat{\mathbf{A}}_{v_{i},ne_{i,j}}\mathbf{F}_{ne_{i,j}}, 
	\end{equation} 
where $(\hat{\mathbf{A}}\mathbf{F})_{v_{i}}$ is the vector in row $v_{i}$ of matrix $\hat{\mathbf{A}}\mathbf{F}$, $\{\mathbf{F}_{ne_{i,j}}\}_{j\in[1,d_{i}]}$ are the feature vectors of $v_{i}$'s neighbors and $\hat{\mathbf{A}}_{v_{i},ne_{i,j}}$ is the element in row $v_{i}$ and column $ne_{i,j}$ of matrix $\hat{\mathbf{A}}$.

	\item 
	The $1_{st}$-layer state $\mathbf{x}_{v_{i}}^{(1)}$ of node $v_{i}$ is 
	\begin{equation}
	\label{eq:ac_ReLU}
		\mathbf{x}^{(1)}_{v_{i}}=\mathrm{ReLU}(\overline{\mathbf{x}}_{v_{i}}^{(0)}\mathbf{M}^{(1)}).
	\end{equation}
     \item The $1_{st}$-layer aggregate state of node $v_{i}$ is 
	\begin{equation}
		\label{eq:aggre2}
		\overline{\mathbf{x}}_{v_{i}}^{(1)}=(\hat{\mathbf{A}}\mathbf{X}^{(1)})_{v_{i}}=\hat{\mathbf{A}}_{v_{i},v_{i}}\mathbf{x}^{(1)}_{_{v_{i}}}+\sum_{j\in[1,d_{i}]}\hat{\mathbf{A}}_{v_{i},ne_{i,j}}\mathbf{x}^{(1)}_{_{ne_{i,j}}}, 
	\end{equation}
	where $\mathbf{X}^{(1)}$ is all nodes' $1_{st}$-layer states.

    \item The $2_{nd}$-layer state of node $v_{i}$ is 
    \begin{equation}
    \label{eq:ac_softm}
    \mathbf{Z}_{v_{i}}=\mathbf{x}^{(2)}_{v_{i}}=\mathrm{Softmax}(\overline{\mathbf{x}}_{v_{i}}^{(1)}\mathbf{M}^{(2)})
    \end{equation}
	which denotes the inference result of node $v_{i}$.
\end{enumerate}

After producing all labeled nodes' inference results through forward propagation, the average cross-entropy loss can be calculated by using all labeled nodes' labels and inference results:
\begin{equation}
	\label{eq:loss}
	\mathcal{L}=-\frac{1}{|\mathcal{T}|}\sum_{v_{i}\in \mathcal{T}}\sum_{j\in[1,C]}\mathbf{T}_{v_{i},j}\ln~ \mathbf{Z}_{v_{i},j},
\end{equation}
where $\mathcal{T}$ is the set of labeled nodes and $\mathbf{T}_{v_{i},j}$ is the classification label of node $v_{i}$, class $j$. Finally, each weight $\mathbf{M}_{i,j}\in\textbf{M}^{(1)}\cup \textbf{M}^{(2)} $ can be updated by its gradient:
\begin{equation}
 \mathbf{M}_{i,j}=\mathbf{M}_{i,j}-\rho\frac{\partial\mathcal{L}}{\partial \mathbf{M}_{i,j}},\notag
\end{equation}
 where $\rho$ is the learning rate. After the GCN model is trained, the classification label of each unlabeled node can be inferred through the forward propagation process.

\subsection{Additive Secret Sharing}
\label{sec:ss}
The 2-out-of-2 additive secret sharing of a secret value $x$ is denoted as $\llbracket x \rrbracket$, which can have the following two types \cite{mohassel2017secureml}: 
\begin{itemize}
	\item \textit{Arithmetic sharing}: $\llbracket x \rrbracket^{A}=\langle x\rangle_{1}+\langle x\rangle_{2}$ where $x,\langle x\rangle_{1}, \langle x\rangle_{2} \in  \mathbb{Z}_{2^{k}}$, and $\langle x\rangle_{1}, \langle x\rangle_{2}$ held by two parties, respectively.

	\item \textit{Binary sharing}: $\llbracket b \rrbracket^{B}=\langle b\rangle_{1}\oplus \langle b\rangle_{2}$ where $b,\langle b\rangle_{1}, \langle b\rangle_{2} \in  \mathbb{Z}_{2}$, and $\langle b\rangle_{1}, \langle b\rangle_{2}$ held by two parties, respectively.
\end{itemize}

The basic operations in the secret sharing domain under a two-party setting are as follows. (1) \emph{Linear operations.} Linear operations on secret-shared values only require local computation. In arithmetic sharing, if $\alpha,\beta,\gamma$ are public constants and $\llbracket x \rrbracket^{A}$, $\llbracket y \rrbracket^{A}$ are secret-shared values, then 
\begin{equation}
\llbracket \alpha x +\beta y+\gamma\rrbracket^{A}= (\alpha \langle x\rangle_{1}+\beta \langle y\rangle_{1}+\gamma, \alpha \langle x\rangle_{2}+\beta \langle y\rangle_{2}).\notag
\end{equation}
Each party can compute their respective shares locally based on the secrets they hold. 
(2) \emph{Multiplication.} Multiplication on secret-shared values requires one round of online communication. To multiply two secret-shared values: $\llbracket z \rrbracket^{A}=\llbracket x \rrbracket^{A} \times\llbracket y \rrbracket^{A}$, the two parties should first share a Beaver triple $\llbracket w \rrbracket^{A}=\llbracket u \rrbracket^{A} \times\llbracket v\rrbracket^{A}$ in the offline phase. After that, the party $P_{i\in\{0,1\}}$ locally computes $\langle e\rangle_{i}=\langle x\rangle_{i}-\langle u\rangle_{i}$ and $\langle f\rangle _{i}=\langle y\rangle _{i}-\langle v\rangle _{i}$, and then opens $e, f$ to each other. Finally, $P_{i}$ holds $
\langle z\rangle _{i}=i\times e\times f+f\times \langle u\rangle _{i}+e\times\langle v\rangle _{i}+ \langle w\rangle _{i}
$.

In binary sharing, the operations are similar to  arithmetic sharing. In particular, the addition operation is replaced by the XOR ($\oplus$) operation and multiplication is replaced by the AND ($\otimes$) operation. 
\vspace{-10pt}
\section{Problem Statement}
\label{sec:problem-statement}

\subsection{System Architecture}
\label{subsec:system-architecture}

There are two kinds of entities in {\secgnn}: the data owner and the cloud. The data owner (e.g., an online shopping enterprise or a social media service provider) wants to leverage the power of cloud computing to train a GNN model over his proprietary graph data as well as provide on-demand inference services once the model is trained.
Due to privacy concerns and that the graph data is proprietary, it is demanded that security must be embedded in the outsourced service, safeguarding the graph data, the trained model, as well as the inference results along the whole service flow.
The cloud providing the secure GNN training and inference is split into three cloud servers $P_{\{1,2,3\}}$ which can be operated by independent cloud service providers (e.g., AWS, Google, and Microsoft) in practice. 
Such multi-server model has also gained increasing traction in prior works on building efficient secure systems for other application domains \cite{ZhengDW18,chen2020metal,dauterman2020dory,araki2021secure,boneh2021lightweight,tan2021cryptgpu,dauterman2022waldo,wang2022privacy,bell2022distributed,wang2022oblivgm,zheng2022secskyline,wang2022PeGraph}.
In addition to the adoption in academia, such multi-server model has also been deployed  in industry. 
For example, Mozilla provides a service of lightweight private collection of telemetry data about Firefox under the non-colluding multi-server model \cite{Mozilla}; Apple and Google cooperatively provide  automated alerts about potential COVID-19 exposure to users, while providing strong privacy protections \cite{WhitePaper}.
{\secgnn} also follows such trend and contributes a new design for enabling privacy-preserving training and inference of GNNs in the cloud.

From a high-level point of view, the data owner in SecGNN will encrypt the graph by \emph{adequately} splitting the {\graph} into secret shares under 2-out-of-2 additive secret sharing, as per our design.
The secret shares are sent to $P_{1}$ and $P_{2}$, respectively. 
Upon receiving the encrypted {\graph}, $P_{\{1,2,3\}}$ perform our {\secgnn} to train the encrypted GNN model in the secret sharing domain.
Once the encrypted GNN model is trained, the data owner can query the cloud service to obtain encrypted classification labels for unlabeled nodes for decryption.
It is noted that the major computation in SecGNN is undertaken by the cloud servers $P_{1}$ and $P_{2}$ while $P_{3}$ provides necessary assistance, so as to simplify the interactions (and so the system implementation and deployment) as much as possible.

\vspace{-10pt}
\subsection{Threat Model}
\label{subsec:threat-model}

Similar to prior security designs in the three-server setting \cite{wagh2019securenn,MohasselRR20,tan2021cryptgpu}, we consider a semi-honest adversary setting where each of the three cloud servers honestly follow our protocol, but may \emph{individually} attempt to learn the private information of the data owner. 
The rationality of the non-collusion assumption is that the cloud service providers hosting the three cloud servers are normally business-driven and well-established parties, who are thus unwilling to risk their valuable commercial reputation by colluding with each other to intentionally breach data privacy \cite{WangWHZR16,chun2014outsourceable,qin2014towards}.
We consider that the data owner wishes to keep the following information private: (i) the features $\mathbf{F}$ and labels $\mathbf{T}$ of nodes, (ii) the adjacency matrix $\mathbf{A}$ encoding the structural information regarding the neighboring nodes of each node, the number of neighbors of each node, and the edge weight between each pair of connected nodes, (iii) the model weights $\mathbf{M}^{(1)}$ and $\mathbf{M}^{(2)}$, and (iv) the inference results for (unlabeled) nodes.

\vspace{-10pt}
\section{Secure GNN Training and Inference}
\label{sec:main}

\begin{figure}[t!]
	\centering
	\includegraphics[width=0.8\linewidth]{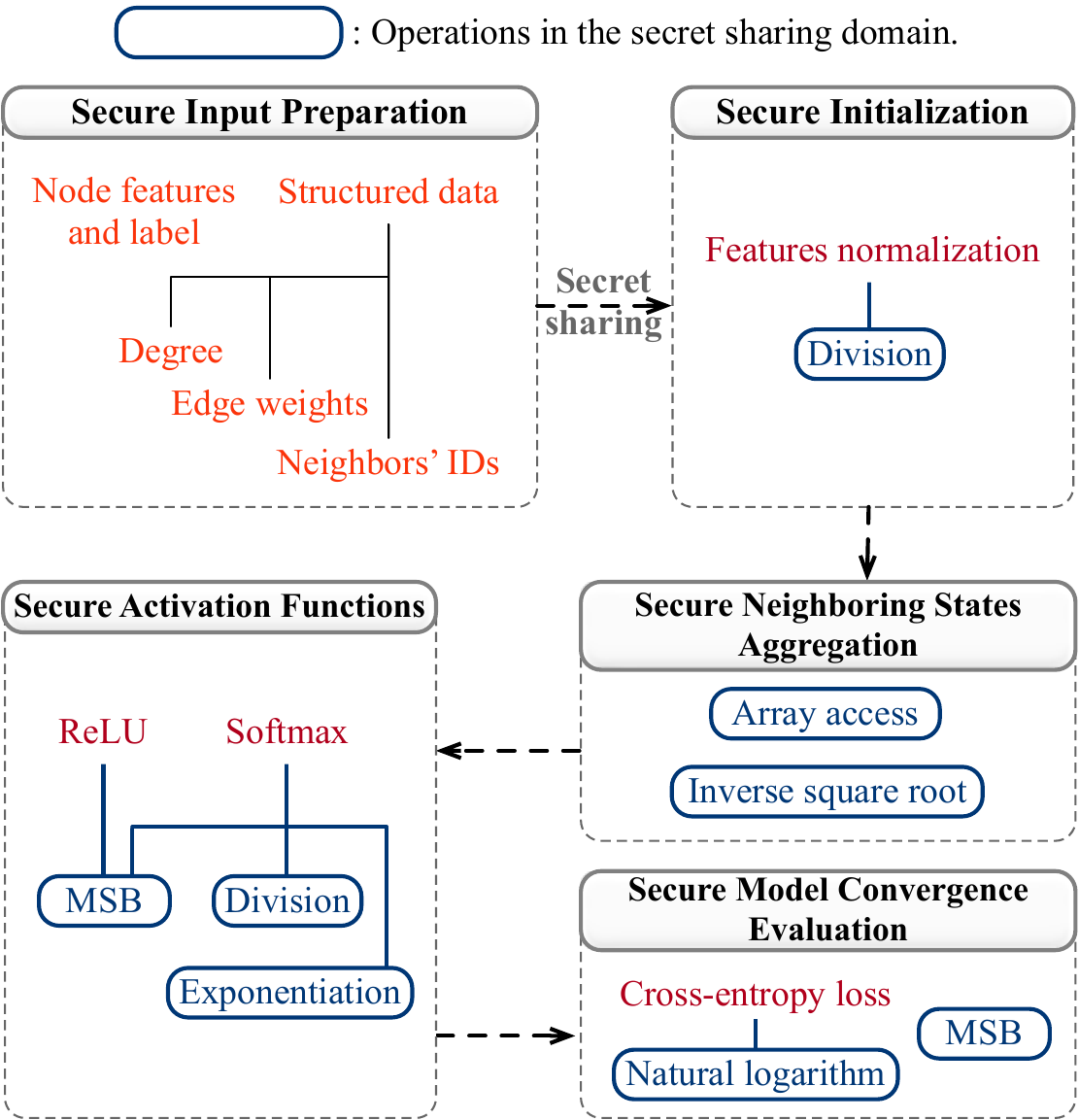}
	\caption{Overview of the core components in {\secgnn}.}
	\label{fig:overview}
	\vspace{-15pt}
\end{figure}

\subsection{{\secgnn} Overview}

Without loss of generality, we will use the two-layer GCN in Eq. \ref{eq:GCN} to illustrate the design of secure training and inference in {\secgnn}. 
Fig. \ref{fig:overview} provides an overview of the core components in SecGNN.
We will start with designing a \textit{secure input preparation} method, which allows the data owner to adequately encrypt its {\graph} so that they can support secure training and inference at the cloud. Subsequently, we design the following essential components to support the secure training and inference procedure at the cloud: (i) \textit{secure initialization} where the cloud normalizes the encrypted features for each node, (ii) \textit{secure neighboring states aggregation} where the cloud computes the encrypted aggregate state (as shown in Eq. \ref{eq:aggre1} and Eq. \ref{eq:aggre2}) for each node, (iii) \textit{secure activation functions} where the cloud activates the encrypted aggregate state for each node,  and (iv) \textit{secure model convergence evaluation} where the cloud performs a secure and fine-grained protocol to evaluate the convergence of the training process. Finally, we will elaborate on how to bridge the designed secure components to give the complete protocol for secure GCN training and inference.
\vspace{-10pt}
\subsection{Secure Input Preparation}
\label{subsec:sec-input-preparation}

\noindent\textbf{Encrypting node features and labels.} Given each node $v_{i}$'s initial feature vector with length $L$: $\mathbf{F}_{v_{i}}\in \mathbb{Z}_{2^{k}}^{L}$, the data owner generates a random vector $\mathbf{r}\in  \mathbb{Z}_{2^{k}}^{L}$. Then the arithmetic ciphertext of $\mathbf{F}_{v_{i}}$ is the secret shares $\langle \mathbf{F}_{v_{i}} \rangle_{1}=\{(\mathbf{F}_{v_{i},s}-\mathbf{r}_{s})\mod \mathbb{Z}_{2^{k}}\}_{s=1}^{L}$ and $\langle \mathbf{F}_{v_{i}}\rangle_{2}=\{\mathbf{r}_{s}\}_{s=1}^{L}$ where $\langle \mathbf{F}_{v_{i}}\rangle_{j}$ is sent to $P_{j}, j\in\{1,2\}$. Similarly, the data owner splits each labeled node's label vector $\mathbf{T}_{v_{i}}$ into secret shares.

\noindent\textbf{Encrypting structural information.} The structural information includes 1) each node's degree $d_{i}$; 2) the neighbors' IDs $ne_{i,j}$ of all nodes; 3) edge weights $w_{i,j}$ between all connected nodes. To protect the structural information, a simple method is to split the adjacency matrix $\mathbf{A}$ into secret shares. However, this method is inefficient and unnecessary since the adjacency matrix $\mathbf{A}$ is usually sparse. 

Instead, our insight is to devise a  set of data structures to properly store and represent the necessary structural information so that they can be encrypted efficiently as well as be used for GCN training and inference. In particular, we represent the structural information with an array-like data structure where each array element refers to a node's neighbor ID list and an edge weight list, and the array index is the node's ID. 

It is noted that as the degrees of nodes are different, the length of nodes' neighbor ID lists varies.
To protect each node's degree, the data owner pads several dummy neighbors' IDs to each node's neighbor ID list so that all nodes have the same number of neighbors. Namely, the secure neighbor ID list of $v_{i}$ is 
\begin{equation}
\mathbf{Ne}_{v_{i}}=\{ne_{i,1},\cdots, ne_{i,d_{i}}\}\cup \{ne_{i,1}',\cdots, ne_{i,d_{max}-d_{i}}'\},\notag
\end{equation}
where $ne_{}'$ are dummy neighbors' IDs, $d_{i}$ is $v_{i}$'s degree and $d_{max}$ is the maximum degree in the graph. However, if these dummy neighbors' IDs point to nodes that do not exist in the graph, the cloud servers will distinguish them from $\mathbf{Ne}_{v_{i}}$ when accessing these dummy neighbors, while if the dummy neighbors' IDs point to real nodes in the graph, the accuracy of the trained model will be degraded dramatically since dummy neighbors' states will change node $v_{i}$'s aggregate state.

Our solution is based on the observation that in GCN or GNN, a node's aggregate state is the weighted sum of its neighboring states, where the weights are relevant with $v_{i}$'s edge weights $w_{i,j}$. Therefore, we can set the edge weights between $v_{i}$ and its dummy neighbors to 0. Namely, $v_{i}$'s secure edge weight list is $\mathbf{W}_{v_{i}}=\{w_{i,1},\cdots, w_{i,d_{i}}\}\cup \{0_{j}\}_{j=1}^{d_{max}-d_{i}}$.
By this way, the effect of the dummy neighbors will be eliminated, which will be understood clearly in Section \ref{sec:aggre}. 
After padding dummy neighbors, the data owner splits each node's secure neighbor ID list $\mathbf{Ne}_{v_{i}}$ and secure edge weight list $\mathbf{W}_{v_{i}}$ into secret shares:
\begin{align}\notag
		\llbracket \mathbf{Ne}_{v_{i},j}\rrbracket^{A}&=\langle \mathbf{Ne}_{v_{i},j}\rangle_{1}+\langle \mathbf{Ne}_{v_{i},j}\rangle_{2}, j\in [1,d_{max}],\\
		\llbracket \mathbf{W}_{v_{i},j}\rrbracket^{A}&=\langle \mathbf{W}_{v_{i},j}\rangle_{1}+\langle \mathbf{W}_{v_{i},j}\rangle_{2}, j\in [1,d_{max}],\notag
\end{align}
where $i\in[1,N],j\in[1,d_{max}]$. Finally, the data owner sends all secret shares $\llbracket \mathbf{F}\rrbracket^{A}, \llbracket \mathbf{T}\rrbracket^{A},\llbracket \mathbf{Ne}\rrbracket^{A}$ and $\llbracket \mathbf{W}\rrbracket^{A} $ to $P_{1}$ and $P_{2}$, respectively. 
Assuming that the nodes in the graph are indexed from $1$ to $N$, i.e., $v_{1}=1, \cdots, v_{N}=N$.
The encrypted {\graph} can be regarded as an array-like data structure where each array element is a node's encrypted data and the index is the node's ID $v_{i}$.

\vspace{-10pt}
\subsection{Secure Initialization}
\label{sec:initia}

Each node's initial features need to be normalized before model training \cite{glorot2010understanding}. Without loss of generality, we will work with a common feature normalization method:
\begin{equation}
	\label{eq:norm}
	\overline{x}_{i}=\frac{x_{i}}{\sum_{j\in[1,L]}x_{j}},i\in[1,L],
\end{equation}
where $L$ is the number of features. Obviously, the \textit{sum} operation is directly supported in the secret sharing domain, but the \textit{division} operation is hard to be directly supported and calls for a tailored protocol for secure division in the secret sharing domain.

Our solution is to approximate the division operation using basic operations (i.e., $+, \times$) supported in the secret sharing domain. We observe that the main challenge in computing division is to compute the reciprocal $\llbracket\frac{1}{  x}\rrbracket^{A}$. Inspired by the recent work \cite{knott2020crypten}, we approximate the reciprocal by the iterative Newton-Raphson algorithm \cite{akram2015newton}:
\begin{equation}
	\label{recip}
 y_{n+1} = y_{n} (2- x y_{n} ), 
\end{equation}
which will converge to $  y_{n}\approx\frac{1}{x }$. Obviously, both subtraction and multiplication are naturally supported in the secret sharing domain. In addition, a faster convergence can be achieved by initializing $ y_{0}$ as:
\begin{equation}
	\label{initial_recip}
 y_{0}=3e^{0.5- x}+0.003.
\end{equation}
How to compute $e^{ x}$ in the secret sharing domain will be introduced in Section \ref{sec:Softmax}. Subroutine \ref{alg:a1} describes our protocol for secure feature normalization.
\begin{algorithm}[!t]
\caption{Secure Feature Normalization} 
\label{alg:a1}
\begin{algorithmic}[1] 
	\REQUIRE Node $v_{i}$'s encrypted features $\{ \llbracket \mathbf{F}_{v_{i},j}\rrbracket^{A}\}_{j\in[1,L]}$.
	
	\ENSURE Encrypted normalized features$\{\llbracket  \overline{\mathbf{F}}_{v_{i},j}\rrbracket^{A}\}_{j\in[1,L]}$.
	\STATE $P_{\{1,2\}}$ locally calculate $\llbracket  S\rrbracket^{A}=\sum_{l\in[1,L]}	\llbracket \mathbf{F}_{v_{i},l}\rrbracket^{A}$.
	
	//$P_{\{1,2\}}$ calculate the approximate $\llbracket \frac{1}{ S}\rrbracket^{A}$:
	\STATE $\llbracket y_{0} \rrbracket^{A}=3\times \llbracket e^{0.5- S}\rrbracket^{A}+0.003$;
	\FOR{$n=0$ to $\mathcal{N}$} 
	\STATE  $\llbracket y_{n+1} \rrbracket^{A}=\llbracket y_{n} \rrbracket^{A}\times (2-\llbracket S\rrbracket^{A}\times \llbracket y_{n} \rrbracket^{A})$.
	\ENDFOR 
	
	//$P_{\{1,2\}}$ calculate the normalized features:
	\FOR{$l=1$ to $L$} 
	\STATE $\llbracket \overline{\mathbf{F}}_{v_{i},l}\rrbracket^{A}=\llbracket \mathbf{F}_{v_{i},l}\rrbracket^{A} \times \llbracket \frac{1}{ S}\rrbracket^{A}$.
	\ENDFOR 
\end{algorithmic}
\end{algorithm}
 \vspace{-10pt}
\subsection{Secure Neighboring States Aggregation}
\label{sec:aggre}

During the state propagation process, the $k_{th}$-layer aggregate state $\overline{\mathbf{x}}_{v_{i}}^{(k)}$ of node $v_{i}$ is computed by node $v_{i}$'s $k_{th}$-layer state $\mathbf{x}_{v_{i}}^{(k)}$ and its neighbors' $k_{th}$-layer states $\mathbf{x}_{\mathbf{Ne}_{v_{i},j}}^{(k)}$, where node $v_{i}$'s $0_{th}$-layer state is its normalized feature vector $\overline{\mathbf{F}}_{v_{i}}$.
As shown in Eq. \ref{eq:aggre1} and Eq. \ref{eq:aggre2}, the aggregate state is the weighted sum of these states. However, since only the encrypted neighbors' IDs are uploaded to the cloud rather than the whole adjacency matrix $\mathbf{A}$, it raises a challenge on how to compute $\hat{\mathbf{A}}$ and perform all subsequent operations. 

Our insight is to first transform the aggregation method in Eq. \ref{eq:aggre1} and Eq. \ref{eq:aggre2} to the other form that can be calculated in the secret sharing domain. The $k_{th}$-layer aggregate state of node $v_{i}$ can be denoted as $(\tilde{\mathbf{D}}^{-\frac{1}{2}}\tilde{\mathbf{A}}\tilde{\mathbf{D}}^{-\frac{1}{2}}\mathbf{X}^{(k)})_{i}$ where $()_{i}$ denotes the $i_{th}$ row of the matrix, and its equivalent form is $\sum_{j=1}^{N}\frac{\tilde{\mathbf{A}}_{i,j}}{\sqrt{\tilde{\mathbf{D}}_{i,i}\tilde{\mathbf{D}}_{j,j}}}\mathbf{X}^{(k)}_{j}$, where $\sum_{k=1}^{N}\tilde{\mathbf{D}}_{i,k}^{-\frac{1}{2}}=\tilde{\mathbf{D}}_{i,i}^{-\frac{1}{2}}$ because $\tilde{\mathbf{D}}$ is a diagonal matrix. Since $\tilde{\mathbf{A}}_{i,j}=0$ if node $v_{j}$ is not $v_{i}$'s neighbors, $w_{i,i}=1$ and $sw_{v_{i}}=\tilde{\mathbf{D}}_{v_{i},v_{i}}$ (i.e., Eq. \ref{eq:degree_matrix}), a more simper form of $v_{i}$'s $k_{th}$-layer aggregate state is:
\begin{equation}
	\label{aggre}
	\overline{\mathbf{x}}^{(k)}_{v_{i}}=\frac{1}{sw_{v_{i}}}\mathbf{x}^{(k)}_{v_{i}}+\sum_{j\in[1,d_{max}]}\frac{\mathbf{W}_{v_{i},j}}{\sqrt{sw_{v_{i}}}\cdot \sqrt{ sw_{\mathbf{Ne}_{v_{i},j}}}}\mathbf{x}^{(k)}_{\mathbf{Ne}_{v_{i},j}}.
\end{equation}
It is noted that since the edge weights between $v_{i}$ and its dummy neighbors are $0$, the effect of these dummy neighbors can be eliminated using Eq. \ref{aggre}. 

When securely computing node $v_{i}$'s $k_{th}$-layer aggregate state $\overline{\mathbf{x}}_{v_{i}}^{(k)}$ by Eq. \ref{aggre}, the cloud servers should first securely access the neighboring nodes' $k_{th}$-layer states $\mathbf{x}^{(k)}_{\mathbf{Ne}_{v_{i},j}},j\in[1,d_{max}]$, and then sum these states by securely multiplying its weight $\frac{\mathbf{W}_{v_{i},j}}{\sqrt{sw_{v_{i}}}\cdot \sqrt{ sw_{\mathbf{Ne}_{v_{i},j}}}}$. However, it is challenging to access the neighboring nodes' states since the neighbors' IDs are encrypted. Meanwhile, the square root is not naturally supported in the secret sharing domain.

To overcome the two obstacles, we design a protocol for \textit{secure neighboring states access} which allows the cloud servers to securely access the neighboring nodes' states, and a protocol for \textit{secure neighboring states summation} allowing the cloud servers to securely perform the square root calculation and the summation of the accessed neighboring states. 
\vspace{-10pt}
\subsubsection{Secure Neighboring States Access}
\label{sec:access}
Neighboring states access is challenging in the secret sharing domain, because we need to access each neighbor's state with both the neighbor's ID $\mathbf{Ne}_{v_{i},j}$ and state $\mathbf{x}^{(k)}_{\mathbf{Ne}_{v_{i},j}}$ being encrypted.
Furthermore, the accessed result should still be encrypted. 
Our insight is to first transform it to the array access problem in the secret sharing domain, i.e., the state vector $\mathbf{x}^{(k)}_{v_{i}}$ of each node in the graph is treated as an array element and node IDs ($1$ to $N$) serve as array indexes. We then consider how to securely access the \textit{encrypted element} at the \textit{encrypted location} from the \textit{encrypted array}.

From the literature, we identify the existence of the state-of-the-art secure array access protocol in the secret sharing domain by Blanton \textit{et al.} \cite{blanton2020improved}, which works in a similar three-party setting and uses 2-out-of-2 secret sharing. This method requires communicating $\mathbf{4m+4}$ elements in \textbf{two} rounds, where $m$ is the length of the encrypted array. In the protocol of \cite{blanton2020improved}, the cloud needs to send random values to each other during accessing the encrypted array element. These shared random values will be used to hide the shares of each array element, and will be offset in the sum of shares. 

Through careful inspection on the protocol, we manage to design a more efficient protocol which only requires communicating $\textbf{2m+2}$ elements in \textbf{one} round. In particular, instead of letting the cloud servers send random values to each other, our idea is to enable them to locally generate \textit{correlated random values} (i.e., $c^{1}+c^{2}+c^{3}=0$) based on a technique from \cite{araki2016high}, which will be used to hide the shares of each array element, and will be offset in the sum of shares. More specifically, in the system initialization phase, the cloud server $P_{i},i\in\{1,2,3\}$ samples a key $k_{i}$ and send $k_{i}$ to $P_{i+1}$ where $P_{3+1=1}$. Then $P_{i}$'s $j_{th}$ correlated random value is
 \begin{equation}
\notag
\mathbf{c}^{i}[j]=\mathbb{F}(k_{i},j)-\mathbb{F}(k_{i-1},j),
\end{equation}
where $k_{1-1=3}$ and $\mathbb{F}$ is a pseudorandom function (PRF). Meanwhile, an agreed random value $r$ between each two cloud servers can also be generated by their shared key. The $j_{th}$ agreed random value between $P_{i}$ and $P_{i+1}$ is $r_{j}^{i}=\mathbb{F}(k_{i},j) \mod m$,
where $m$ is the length of the secret array.

Given a secret array $\llbracket \mathbf{a}\rrbracket^{A} =\langle \mathbf{a}\rangle_{1}+\langle \mathbf{a}\rangle_{2}$ and a secret index $\llbracket I\rrbracket^{A}= \langle  I\rangle_{1} +\langle  I\rangle_{2}$ held by $P_{1}$ and $P_{2}$, respectively, our protocol, as shown in Subroutine \ref{alg:a2}, for securely accessing the element $ \llbracket \mathbf{a}[I]\rrbracket^{A}$ is as follows:
\begin{enumerate}
	\item   $P_{1}$ first rotates its shares $r^{1}$ locations:
	\begin{equation}
		\langle \mathbf{a}[1] \rangle_{1},\cdots, \langle \mathbf{a}[m] \rangle_{1} ~~~	\circlearrowright\notag \blacktriangledown\notag
	\end{equation}
	\begin{equation}
		\langle \mathbf{a}[m-r^{1}] \rangle{_{1}},\cdots, 	\langle \mathbf{a}[m] \rangle_{1}, \langle \mathbf{a}[1] \rangle_{1},\cdots ,\langle \mathbf{a}[m-r^{1}+1] \rangle_{1}. \notag
	\end{equation}
	Then, $P_{1}$ sets the new array as $	\langle   \mathbf{a}'[j] 	\rangle_{1}=	\langle   \mathbf{a}[j] \rangle_{1}+\mathbf{c}^{1}[j], j\in[1,m]$, and rotates it $r^{3}$ locations:
		\begin{equation}
		\langle \mathbf{a}'[1] \rangle_{1},\cdots, \langle \mathbf{a}'[m] \rangle_{1} ~~~	\circlearrowright\notag \blacktriangledown\notag
	\end{equation}
	\begin{equation}
		\langle \mathbf{a}'[m-r^{3}] \rangle_{1},\cdots, 	\langle \mathbf{a}'[m] \rangle_{1}, \langle \mathbf{a}'[1] \rangle_{1},\cdots, \langle \mathbf{a}'[m-r^{3}+1] \rangle_{1}. \notag
	\end{equation}
	The new array is denoted as  $\langle  \mathbf{a}'' \rangle _{2}$.  Finally, $P_{1}$ sets $ \langle h\rangle _{1}=(\langle I \rangle_{1}+r^{1}+r^{3})\mod m$, then sends $ \langle h\rangle _{1}$ and $\langle  \mathbf{a}''\rangle _{2}$ to $P_{2}$. 

	\item 
	$P_{2}$ sets $h=(\langle h\rangle _{1}+\langle I\rangle_{2} )\mod m$, then $P_{2}$'s share of the accessed element $\mathbf{a}[I]$ is $\underline{\langle  \mathbf{a}''[h] \rangle _{2}}$.
	
	\item $P_{2}$ first rotates its shares of the raw array $r^{1}$ locations:
	\begin{equation}
	\langle \mathbf{a}[1] \rangle_{2},\cdots, \langle \mathbf{a}[m] \rangle_{2} ~~~	\circlearrowright\notag \blacktriangledown\notag
	\end{equation}
	\begin{equation}
	\langle \mathbf{a}[m-r^{1}] \rangle_{2},\cdots, 	\langle \mathbf{a}[m] \rangle_{2}, \langle \mathbf{a}[1] \rangle_{2},\cdots, \langle \mathbf{a}[m-r^{1}+1] \rangle_{2}. \notag
	\end{equation}
    Then, $P_{2}$ sets the new array as $ \langle   \mathbf{a}'[j]	\rangle_{2}=	\langle   \mathbf{a}[j] \rangle_{2}+\mathbf{c}^{2}[j], j\in[1,m]$, and sends $ \langle   \mathbf{a}'\rangle_{2}$ and $h$ to $P_{3}$. 
    
	\item $P_{3}$ first sets $\langle \mathbf{ a}''[j] 	\rangle_{3}=	\langle   \mathbf{a}'[j]\rangle_{2}+\mathbf{c}^{3}[j], j\in[1,m]$, then rotates them $ r^{3}$ locations: 
	\begin{equation}
	\langle \mathbf{a}''[1] \rangle_{3},\cdots, \langle \mathbf{a}''[m] \rangle_{3}~~~	\circlearrowright \blacktriangledown\notag
	\end{equation}
	\begin{align}
	&\langle \mathbf{a}''[m-r^{3}] \rangle_{3},\cdots, 	\langle \mathbf{a}''[m] \rangle_{3},\\\notag
	&\langle \mathbf{a}''[1] \rangle_{3},\cdots ,\langle \mathbf{a}''[m-r^{3}+1] \rangle_{3}. \notag
	\end{align}
   Finally, $P_{3}$'s share of $\mathbf{a}[I]$ is $\underline{\langle  \mathbf{a}''[h] \rangle _{3}}$.
\end{enumerate}

\begin{algorithm}[!t]
	\caption{Secure Array Access} 
	\label{alg:a2}
	\begin{algorithmic}[1] 
		\REQUIRE  $P_{\{1,2\}}$ hold the secret shares $\langle \mathbf{a}\rangle_{\{1,2\}}$ and $\langle I \rangle_{\{1,2\}}$, respectively;  
		$P_{\{1,2,3\}}$ hold random value array $\mathbf{c}^{\{1,2,3\}}$, respectively; $P_{\{1,2\}}$ and $P_{\{1,3\}}$ hold the agreed random values $r^{1}$ and $r^{3}$, respectively.
		
		\ENSURE $P_{\{2,3\}}$ hold the accessed element $\llbracket\mathbf{a}[I]\rrbracket^{A}$.
		
		// \underline{$P_{1}$ locally performs:}
		
		\STATE  $\langle\mathbf{a}\rangle_{1}=\langle\mathbf{a} \rangle_{1} \circlearrowright r^{1}$.  $\langle\mathbf{a}'\rangle_{1}= \langle\mathbf{a}\rangle_{1}+\mathbf{c}^{1}$. $\langle\mathbf{a}''\rangle_{2}=\langle\mathbf{a} '\rangle_{1} \circlearrowright r^{3}$.

		\STATE $ \langle h\rangle _{1}=(\langle I \rangle_{1}+r^{1}+r^{3})\mod m$.
		
		\STATE Sending $ \langle h\rangle _{1}$ and $\langle  \mathbf{a}''\rangle _{2}$ to $P_{2}$.
		
		// \underline{$P_{2}$ locally performs:}
			
		\STATE  $h=(\langle h\rangle _{1}+\langle I\rangle_{2} )\mod m$.
		
		\STATE $P_{2}$ sets the secret share of $\mathbf{a}[I]$ as $\underline{\langle  \mathbf{a}''[h] \rangle _{2}}$.
		
		\STATE $\langle\mathbf{a}\rangle_{2}=\langle\mathbf{a} \rangle_{2} \circlearrowright r^{1}$. $\langle\mathbf{a}'\rangle_{2}= \langle\mathbf{a}\rangle_{2}+\mathbf{c}^{2}$.
		
		\STATE Sending $ h$ and $\langle\mathbf{a}'\rangle_{2}$ to $P_{3}$.
		
		// \underline{$P_{3}$ locally performs:}
			
		\STATE $ \langle\mathbf{a}''\rangle_{3}= \langle\mathbf{a}'\rangle_{2}+\mathbf{c}^{3}$. $\langle\mathbf{a}''\rangle_{3}=\langle\mathbf{a}'' \rangle_{3} \circlearrowright r^{3}$.

		\STATE  $P_{3}$ sets the secret share of $\mathbf{a}[I]$ as $ \underline{\langle  \mathbf{a}''[h] \rangle _{3}}$.
	\end{algorithmic}
\end{algorithm}

It is noted that, the correlated random values $\mathbf{c}^{\{1,2,3\}}$ and agreed random values $r^{\{1,3\}}$ all do not require online communication because they are generated by the PRF and shared keys. Therefore, our protocol only requires communicating $2m+2$ elements in one round, i.e., in steps 1), 3). 

\noindent\textbf{Correctness analysis.} $P_{2}$'s shares $\langle  \mathbf{a}'' \rangle_{2}$ are generated by $
	\langle  \mathbf{a}'' \rangle_{2}=(\langle  \mathbf{a} \rangle _{1} \circlearrowright r^{1}+\mathbf{c}^{1})\circlearrowright r^{3}$,
where "$\circlearrowright$" denotes "rotate". $P_{3}$'s shares $\langle  \mathbf{a}''\rangle _{3}$ are generated by $
	\langle  \mathbf{a}''\rangle _{3}=(\langle  \mathbf{a} \rangle _{2} \circlearrowright r^{1}+\mathbf{c}^{2}+\mathbf{c}^{3})\circlearrowright r^{3}$. Based on $\mathbf{c}^{1}[j]+\mathbf{c}^{2}[j]+\mathbf{c}^{3}[j]=0$, we can obtain $\langle  \mathbf{a}''\rangle _{2}+\langle  \mathbf{a}''\rangle _{3}=\mathbf{a}\circlearrowright r^{1}\circlearrowright r^{3}$, namely, for $j\in[1,m]$, $
\langle  \mathbf{a}''[(j+r^{1}+r^{3})\mod m]\rangle _{2}+
\langle  \mathbf{a}''[(j+r^{1}+r^{3})\mod m]\rangle _{3}=\mathbf{a}[j]$. Since $h=I+r^{1}+r^{3}$, the accessed element is exactly $\langle \mathbf{a}''[h] \rangle _{2}+ \langle \mathbf{a}''[h] \rangle _{3}=\mathbf{a}[I]$.

It is noted that since $P_{\{1,2\}}$ perform main computations in our protocol but $P_{3}$ holds the secret share of the accessed element, $P_{3}$ should re-share its secret $\langle \mathbf{a}''[h]\rangle_{3}$ to $P_{1}$ and $P_{2}$. More specifically, $P_{3}$ generates a random value $s$, and then sends $s, \langle \mathbf{a}''[h]\rangle_{3}-s$ to $P_{1}$ and $P_{2}$, respectively. Finally, the shares held by $P_{1}$ and $P_{2}$ are $s$ and $\langle \mathbf{a}''[h]\rangle_{2}+\langle \mathbf{a}''[h]\rangle_{3}-s$, respectively.

\vspace{-10pt}
\subsubsection{Secure Neighboring States Summation}
After performing the above secure neighboring states access protocol, $P_{\{1,2\}}$ hold all encrypted neighboring states $\mathbf{x}^{(k)}_{\mathbf{Ne}_{v_{i},j}}$ of node $v_{i}$. In addition, as shown in Eq. \ref{aggre}, the sum of $v_{i}$'s each neighbor's own edge weights $sw_{\mathbf{Ne}_{v_{i},j}}$ (i.e., Eq. \ref{eq:degree_matrix}) are used in calculating the aggregate state $\overline{\mathbf{x}}^{(k)}_{v_{i}}$. Since each neighbor's own edge weights are attached with its ID like its state, similar to accessing neighboring states, the cloud servers should access each neighbor's edge weights using the above secure array access protocol. After that, the cloud servers can obtain node $v_{i}$'s each neighbor's encrypted state $\mathbf{x}^{(k)}_{\mathbf{Ne}_{v_{i},j}}$ and the encrypted sum of edge weights $sw_{\mathbf{Ne}_{v_{i},j}}$ using each encrypted neighbor's ID. Then the cloud uses Eq. \ref{aggre} to calculate node $v_{i}$'s aggregate state $\overline{\mathbf{x}}_{v_{i}}^{(k)}$. However, the \textit{square root} is not naturally supported in secret sharing.

\begin{algorithm}[!t]
	\caption{Secure Neighboring States Summation} 
	\label{alg:a3}
	\begin{algorithmic}[1] 
		\REQUIRE  Node $v_{i}$'s $k_{th}$-layer state and edge weights:  $ \llbracket\mathbf{x}^{(k)}_{v_{i}} \rrbracket^{A}$ and $\llbracket \mathbf{W}_{v_{i},j} \rrbracket^{A},j\in[1,d_{max}]$, and $v_{i}$'s neighboring $k_{th}$-layer states and their edge weight sum: $\llbracket\mathbf{x}^{(k)}_{\mathbf{Ne}_{v_{i},j}} \rrbracket^{A}$ and $\llbracket sw_{\mathbf{Ne}_{v_{i},j}} \rrbracket^{A}, j\in[1,d_{max}]$. 
		
		\ENSURE $v_{i}$'s $k_{th}$-layer encrypted aggregate state $\llbracket\overline{\mathbf{x}}^{(k)}_{v_{i}}\rrbracket^{A}$.
		
		\STATE $P_{\{1,2\}}$ first calculate the approximate $\llbracket\frac{1}{sw_{v_{i}}} \rrbracket^{A}$ by Eq. \ref{recip}.
		
		//$P_{\{1,2\}}$ calculate each approximate $\llbracket\frac{1}{\sqrt{sw_{id}}} \rrbracket^{A}$:
		
		\FOR{each $id\in\{v_{i}\}\cup \{\mathbf{Ne}_{v_{i},j}\}_{j\in[1,d_{max}]}$} 
		
		\STATE $\llbracket y_{0} \rrbracket^{A}=3\times \llbracket e^{0.5- sw_{id}}\rrbracket^{A}+0.003$.
		
		\FOR{$n=0$ to $\mathcal{N}$} 
		\STATE  $\llbracket y_{n+1} \rrbracket^{A} =\frac{1}{2}\times \llbracket y_{n}\rrbracket^{A}\times(3-\llbracket sw_{id}\rrbracket^{A}\times\llbracket y_{n}\rrbracket^{A}\times\llbracket y_{n}\rrbracket^{A})$.
		\ENDFOR 
		\ENDFOR		
		
		//$P_{\{1,2\}}$ calculate the aggregate state:  
		
		\STATE $\llbracket \overline{\mathbf{x}}^{(k)}_{v_{i}}\rrbracket^{A}=\llbracket  \frac{1}{sw_{v_{i}}} \rrbracket^{A}\times \llbracket \mathbf{x}^{(k)}_{v_{i}} \rrbracket^{A}$.
		
		\FOR{$j=1$ to $d_{max}$} 
		\STATE $\llbracket \overline{\mathbf{x}}^{(k)}_{v_{i}} \rrbracket^{A}=\llbracket \overline{\mathbf{x}}^{(k)}_{v_{i}} \rrbracket^{A}+\llbracket \mathbf{W}_{v_{i},j} \rrbracket^{A}\times\llbracket  \frac{1}{\sqrt{sw_{v_{i}}}} \rrbracket^{A}\times\llbracket  \frac{1}{\sqrt{sw_{\mathbf{Ne}_{v_{i},j}}}} \rrbracket^{A}\times \llbracket \mathbf{x}^{(k)}_{\mathbf{Ne}_{v_{i},j}} \rrbracket^{A}$.
		\ENDFOR
	\end{algorithmic}
\end{algorithm}

Inspired by the very recent work \cite{knott2020crypten},  we resort to the approach of approximating the inverse square root by iterative Newton-Raphson algorithm \cite{akram2015newton}:
\begin{equation}
	\label{eq:square_root}
y_{n+1}=\frac{1}{2}y_{n}(3-xy_{n}^{2}), 
\end{equation}
which will converge to $y_{n}\approx \frac{1}{\sqrt{x}}$. Obviously, both subtraction and multiplication are naturally supported in the secret sharing domain. The initialization $y_{0}$ can be set as $y_{0}=3e^{0.5- x}+0.003$.

After securely accessing each neighboring node's state $\llbracket\mathbf{x}^{(k)}_{\mathbf{Ne}_{v_{i},j}} \rrbracket^{A}$ for node $v_{i}$, the cloud servers utilize the above secure inverse square root protocol to perform the secure neighboring states summation, as shown in Subroutine \ref{alg:a3}.

\subsection{Secure Activation Functions}
\label{sec:activation}

After a node $v_{i}$'s $k_{th}$-layer aggregate state $\overline{\mathbf{x}}^{(k)}_{v_{i}}$ is calculated and multiplied with the trainable weight matrix $\mathbf{M}^{(k+1)}$, i.e., $\overline{\mathbf{x}}^{(k)}_{v_{i}}\mathbf{M}^{(k+1)}$, an activation functions needs to be applied over $\hat{\mathbf{x}}^{(k)}_{v_{i}}=\overline{\mathbf{x}}^{(k)}_{v_{i}}\mathbf{M}^{(k+1)}$ to calculate $v_{i}$'s $(k+1)_{th}$-layer state $\mathbf{x}^{(k+1)}_{v_{i}}$, according to Eq. \ref{eq:ac_ReLU} and Eq. \ref{eq:ac_softm} respectively.
In this section, we will introduce how to securely compute the activation functions in the secret sharing domain.
\vspace{-10pt}
\subsubsection{Secure ReLU Function}
\label{sec:ReLU}
The function $\mathrm{ReLU}:=max(x,0)$ is a popular activation function in neural network, whose core is to test whether $x>0$ or not. However, the comparison operation is not naturally supported in the secret sharing domain. 
We note that given the computation is in $\mathbb{Z}_{2^k}$, it suffices to tailor a protocol for testing whether the Most Significant Bit (MSB) of $\llbracket x\rrbracket^{A} $ is $0$ or not \cite{mohassel2018aby3,LiuZYY21}. Mohassel et al. \cite{mohassel2018aby3} propose to compute the MSB using secure bit decomposition (only directions briefly mentioned without a concrete construction though). It is noted that different from our system, their security design uses replicated secret sharing, which runs among three cloud servers and needs them to interact with each other throughout the process. Inspired by their work, we provide an alternative design to evaluate the MSB under additive secret sharing that suits our system, in which the computation is mainly conducted by $P_{1}$ and $P_{2}$ while $P_{3}$ just provides necessary triples in advance. The details of our design are as follows.

\begin{figure}[t!]
	\centering
	\includegraphics[width=0.8\linewidth]{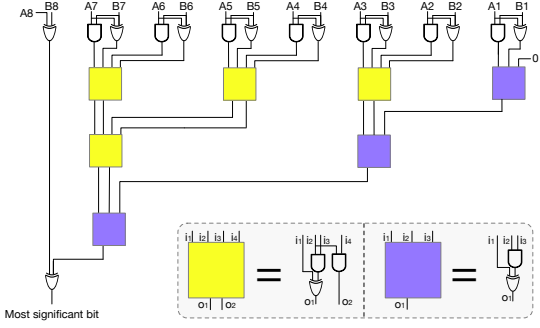}
	\caption{An 8-bit tailored PPA.}
	\label{fig:ppa}
	\vspace{-10pt}
\end{figure}

Given two fixed point numbers' complement $A$ and $B$, which can represent the shares of a secret value, the MSB of $A+B$ can be computed by a tailored Parallel Prefix Adder (PPA) \cite{harris2003taxonomy}. Fig. \ref{fig:ppa} illustrates an 8-bit tailored PPA. We can apply the tailored PPA to the secret shares.
In particular, given the $k$-bit secret sharing $\llbracket x \rrbracket^{A}=\langle x\rangle_{1}+\langle x\rangle_{2}$ held by $P_{1}$ and $P_{2}$, they first locally decompose the complement of $\langle x\rangle_{i}$ into bits: $\langle x\rangle_{i}=x_{i}[1],\cdots, x_{i}[k],i\in\{1,2\}$. After that, they input the bits into a $k$-bit tailored PPA to perform secure AND and XOR calculations. Given a $k$-bit number, the tailored PPA can calculate its MSB in $\mathrm{log}~ k$ rounds. In addition, as shown in Section \ref{sec:ss}, in additive secret sharing, a AND gate requires online communication 4 bits in one round, while an XOR gate does not require communication. Therefore, to calculate the MSB of a $k$-bit number in additive secret sharing, our alternative design requires the two cloud servers to online communicate $12k-12-4 \mathrm{log}~k$ bits in $\mathrm{log}~k$ rounds. 
It is noted that, using the above method, $msb(x)=1$ if $x<0$, and $msb(x)=0$ if $x>=0$. To be compatible with the subsequent operation, one of $P_{1}$ and $P_{2}$ flips its share $\langle msb(x)\rangle_{1}$ or $\langle msb(x)\rangle_{2}$ so that $msb(x)'=0$ if $x<0$, and $msb(x)'=1$ if $x>=0$.

However, using the above method, the cloud servers only obtain $\llbracket msb(x)'\rrbracket^{B}$, not $\mathrm{ReLU}(x)$, and the cloud servers also need to calculate $\llbracket msb(x)'\rrbracket^{B}\times\llbracket x\rrbracket^{A}$ when $P_{1}$ and $P_{2}$ hold $ \langle  msb(x)'\rangle_{1},\langle x\rangle_{1}$ and $ \langle  msb(x)' \rangle_{2},\langle x\rangle_{2}$, respectively. Inspired by \cite{mohassel2018aby3}, we design a tailored protocol for securely evaluating $\llbracket\mathrm{ReLU}(x)\rrbracket^{A}$ in additive secret sharing:
\begin{enumerate}
	\item $P_{1}$ randomly generates $r\in\mathbb{Z}_{2^{k}}$ and defines $m_{b\in\{0,1\}}:=(b\oplus \langle  msb(x)' \rangle_{1})\times\langle x\rangle_{1}-r$,
	and sends them to $P_{2}$.
	\item $P_{2}$ chooses $m_{b}$ based on $ \langle  msb(x)' \rangle_{2}$, namely, $P_{2}$ chooses $m_{0}$ if $ \langle  msb(x)' \rangle_{2}=0$, and otherwise $P_{2}$ chooses $m_{1}$. Therefore, the secret share held by $P_{2}$ is $m_{\langle  msb(x)' \rangle_{2}}= msb(x)'\times\langle x\rangle_{1}-r$, and the secret share held by $P_{1}$ is $r$.
	
	\item For the other secret share $\langle x\rangle_{2} $, $P_{2}$ acts as the sender and $P_{1}$ acts as the receiver to perform step 1) and 2) again.
\end{enumerate}

Finally, $P_{\{1,2\}}$ hold the secret shares $\langle msb(x)'\times x\rangle_{\{1,2\}}$. It is noted that, in \cite{mohassel2018aby3}, $P_{\{1,2\}}$ should re-share their shares to $P_{3}$ since they work on replicated secret sharing. Subroutine \ref{alg:a4} describes our protocol for secure ReLU function.

\begin{algorithm}[!t]
	\caption{Secure ReLU Function} 
	\label{alg:a4}
	\begin{algorithmic}[1] 
		\REQUIRE  $P_{\{1,2\}}$ hold node $v_{i}$'s $0_{th}$-layer encrypted state : $\{\llbracket \hat{\mathbf{x}}^{(0)}_{v_{i},1} \rrbracket^{A}, \cdots, \llbracket \hat{\mathbf{x}}^{(0)}_{v_{i},H}\rrbracket^{A}\}$, where $\hat{\mathbf{x}}^{(0)}_{v_{i}}=\overline{\mathbf{x}}^{(0)}_{v_{i}}\mathbf{M}^{(1)}$.
		\ENSURE $P_{\{1,2\}}$ hold node $v_{i}$'s $1_{st}$-layer encrypted state: $\{\llbracket \mathbf{x}^{(1)}_{v_{i},1} \rrbracket^{A}, \cdots, \llbracket \mathbf{x}^{(1)}_{v_{i},H}\rrbracket^{A}\}$.
		
		\FOR{$j=1$ to $H$}
		
		\STATE Securely calculating $\llbracket msb(\hat{\mathbf{x}}^{(0)}_{v_{i},j})\rrbracket^{B}$ by tailored PPA.
		
		\STATE One of $P_{\{1,2\}}$ flips its share, and then $P_{\{1,2\}}$'s shares are $\langle  msb(\hat{\mathbf{x}}^{(0)}_{v_{i},j})' \rangle_{1}$ and $\langle  msb(\hat{\mathbf{x}}^{(0)}_{v_{i},j})' \rangle_{2}$, respectively.
		
		\STATE $P_{1}$ randomly generates $r\in\mathbb{Z}_{2^{k}}$, and sends $m_{b}:=(b\oplus \langle  msb(\hat{\mathbf{x}}^{(0)}_{v_{i},j})' \rangle_{1})\times\langle \hat{\mathbf{x}}^{(0)}_{v_{i},j}\rangle_{1}-r, b\in\{0,1\}$ to $P_{2}$.
		
		\STATE$P_{2}$ chooses $m_{b}$ based on $ \langle  msb(\hat{\mathbf{x}}^{(0)}_{v_{i},j})' \rangle_{2}$.
		
		\STATE $P_{2}$ randomly generates $r'\in\mathbb{Z}_{2^{k}}$, and sends $m_{b}':=(b\oplus \langle  msb(\hat{\mathbf{x}}^{(0)}_{v_{i},j})' \rangle_{2})\times\langle \hat{\mathbf{x}}^{(0)}_{v_{i},j}\rangle_{2}-r', b\in\{0,1\}$ to $P_{1}$.
		
		\STATE$P_{1}$ chooses $m_{b}'$ based on $ \langle  msb(\hat{\mathbf{x}}^{(0)}_{v_{i},j})' \rangle_{1}$.
		
		\STATE Finally, $P_{1}$ holds $msb(\hat{\mathbf{x}}^{(0)}_{v_{i},j})'\times\langle \hat{\mathbf{x}}^{(0)}_{v_{i},j}\rangle_{2}-r'+r$ and $P_{2}$ holds $msb(\hat{\mathbf{x}}^{(0)}_{v_{i},j})'\times\langle \hat{\mathbf{x}}^{(0)}_{v_{i},j}\rangle_{1}-r+r'$.
		\ENDFOR
	\end{algorithmic}
\end{algorithm}

\subsubsection{Secure Softmax Function}
\label{sec:Softmax}
GCN usually considers a multi-classification task, which requires the Softmax function (i.e., Eq. \ref{eq:Softmax}) to normalize the probabilities of inference results. Therefore, we need a protocol to securely compute the Softmax function.

First, to avoid error from calculating the exponential function on very large or very small values, a frequently-used method is to calculate the Softmax function on $\mathbf{x}-max(\mathbf{x})$. When calculating $max(\mathbf{x})$ in the secret sharing domain, to reduce the overhead, we can use the binary-tree form, e.g., $max(max(\llbracket x_{1}\rrbracket^{A},\llbracket x_{2}\rrbracket^{A}),max(\llbracket x_{3}\rrbracket^{A},\llbracket x_{4}\rrbracket^{A}))$, which requires $\mathrm{log}C$ rounds comparison and $C$ is the number of classifications. We can directly use the secure ReLU function introduced above to perform $max()$:
\begin{equation}
	max(\llbracket x_{1}\rrbracket^{A},\llbracket x_{2}\rrbracket^{A})= \mathrm{ReLU}(\llbracket x_{1}\rrbracket^{A}-\llbracket x_{2}\rrbracket^{A})+\llbracket x_{2}\rrbracket^{A}.\notag
\end{equation}

After that, the cloud servers should compute $\llbracket e^{ x}\rrbracket^{A}$. Since $\llbracket e^{ x}\rrbracket^{A}$ is not naturally supported in the secret sharing domain, we first approximate $e^{ x}$ using its limit characterization \cite{knott2020crypten}:
\begin{equation}
\label{eq:exp}
e^{x}\approx (1+\frac{ x}{2^{n}})^{2^{n}}.
\end{equation}
However, the approximation is inefficient if the cloud servers serially calculate the multiplication, which will require to calculate $2^{n}$ multiplications in $2^{n}$ rounds communication. Our solution is to calculate the approximation by the binary-tree form. More specifically, the core of Eq. \ref{eq:exp} is to calculate $( \llbracket x\rrbracket^{A})^{2^{n}}$, thus $P_{\{1,2\}}$ first calculate $( \llbracket x\rrbracket^{A})^{2}$ in one round, and then set $\llbracket y\rrbracket^{A}=(\llbracket x\rrbracket^{A})^{2}$ followed by calculating $(\llbracket y\rrbracket^{A})^{2}$ in one round. Therefore, $P_{\{1,2\}}$ can calculate $(\llbracket x\rrbracket^{A})^{2^{n}}$ in $\mathrm{log}~2^{n}=n$ rounds.  Subroutine \ref{alg:a5} describes our protocol for secure Softmax function.

\begin{algorithm}[!t]
	\caption{Secure Softmax Function} 
	\label{alg:a5}
	\begin{algorithmic}[1] 
	    \REQUIRE  $P_{\{1,2\}}$ hold node $v_{i}$'s $1_{th}$-layer encrypted state : $\{\llbracket \hat{\mathbf{x}}^{(1)}_{v_{i},1} \rrbracket^{A}, \cdots, \llbracket \hat{\mathbf{x}}^{(1)}_{v_{i},C}\rrbracket^{A}\}$, where $\hat{\mathbf{x}}^{(1)}_{v_{i}}=\overline{\mathbf{x}}^{(1)}_{v_{i}}\mathbf{M}^{(2)}$.
		\ENSURE $P_{\{1,2\}}$ hold node $v_{i}$'s $2_{nd}$-layer encrypted state: $\{\llbracket \mathbf{x}^{(2)}_{v_{i},1} \rrbracket^{A}, \cdots, \llbracket \mathbf{x}^{(2)}_{v_{i},C}\rrbracket^{A}\}$.
		
		\STATE Calculate $\llbracket Q \rrbracket^{A}=max\{\llbracket \hat{\mathbf{x}}^{(1)}_{v_{i},j} \rrbracket^{A} \}_{j\in[1,C]}$ by $ \mathrm{ ReLU}()$.
		
		\STATE Locally calculate $\llbracket\hat{\mathbf{x}}'^{(1)}_{v_{i},j}\rrbracket^{A}=\llbracket\hat{\mathbf{x}}^{(1)}_{v_{i},j}\rrbracket^{A}-\llbracket Q \rrbracket^{A},j\in[1,C]$.
		
		 //$P_{\{1,2\}}$ calculate the approximate $\llbracket e^{ \hat{\mathbf{x}}'^{(1)}_{v_{i},j} }\rrbracket^{A}$:
		\FOR{$j=1$ to $C$}
		\STATE $\llbracket y_{0} \rrbracket^{A}=1+\frac{\llbracket \hat{\mathbf{x}}'^{(1)}_{v_{i},j} \rrbracket^{A}}{2^{\mathcal{N}}}$.
		\FOR{$n=0$ to $\mathcal{N}$}
		\STATE$\llbracket y_{n+1} \rrbracket^{A}=\llbracket y_{n} \rrbracket^{A}\times \llbracket y_{n} \rrbracket^{A}$.
		\ENDFOR
		\ENDFOR
		\STATE $P_{\{1,2\}}$ first locally calculate $\llbracket S\rrbracket^{A}= \sum_{j\in[1,C]}\llbracket e^{ \hat{\mathbf{x}}'^{(1)}_{v_{i},j} }\rrbracket^{A}$, and then calculate the approximate $\llbracket\frac{1}{S}\rrbracket^{A}$ by Eq. \ref{recip}.
		
		//$P_{\{1,2\}}$ calculate the $2_{nd}$-layer encrypted state:
		\FOR{$j=1$ to $C$}
		\STATE $\llbracket \mathbf{x}^{(2)}_{v_{i},j} \rrbracket^{A}=\llbracket e^{ \overline{\mathbf{x}}'^{(1)}_{v_{i},j} }\rrbracket^{A}\times \llbracket\frac{1}{S }\rrbracket^{A}$.
		\ENDFOR
	\end{algorithmic}
\end{algorithm}
 \vspace{-13pt}
\subsection{Secure Model Convergence Evaluation}
\label{sec:stop}

So far we have presented our solution for securely realizing the forward propagation process as given in Eq. \ref{eq:GCN} in the secret sharing domain. We now show how to securely evaluate the convergence of the model training process. 

We note that prior works (e.g., \cite{mohassel2018aby3,wagh2020falcon,tan2021cryptgpu}) on secure CNN training generally terminate the training process at a specified number of epochs. However, the convergence of the training process is unpredictable, which can depend on various factors such as the training data set, the learning parameter setting, and random factors in the nature of model training. A fixed number of epochs without considering the property of models may easily lead to overfitting or underfitting \cite{AalstRVDKG10}. Therefore, instead of specifying a certain number of epochs, it is much more desirable to directly evaluate the model convergence in a secure manner. 

Our solution is to calculate the encrypted cross-entropy loss and then calculate the difference in the encrypted cross-entropy loss between two adjacency epochs. If the difference is smaller than a public threshold $\alpha$ and lasts for a window size, the cloud servers $P_{\{1,2\}}$ will conclude that the model is convergent and will terminate the training. 
From the computation, $P_{\{1,2\}}$ know nothing except the \emph{necessary} fact about whether the difference in the cross-entropy loss between two adjacency epochs is less than $\alpha$. 

A new challenge arises, namely, how to calculate the cross-entropy loss in the secret sharing domain. In Eq. \ref{eq:loss}, the \textit{natural logarithm} is not naturally supported in the secret sharing domain, and requires a tailored protocol. Inspired by \cite{knott2020crypten}, we approximate $\mathrm{ln}~x$ by:
\begin{equation}
	\label{log}
	y_{n+1}=y_{n}-\sum_{k\in[1,\mathcal{K}]}\frac{1}{k}(1-xe^{-y_{n}})^{k},
\end{equation}
which will converge to $  y_{n}\approx \mathrm{ln} ~x$. The initial value can be set as $y_{0}=\frac{x}{120}-20e^{-2x-1}+3$ \cite{knott2020crypten}. Obviously, both subtraction and multiplication are naturally supported in secret sharing domain, and $\llbracket e^{-2 x-1}\rrbracket^{A}$ can be calculated by  Eq. \ref{eq:exp}.

After obtaining the encrypted loss $\llbracket  \mathcal{L}^{j} \rrbracket^{A}$ and $\llbracket  \mathcal{L}^{j+1} \rrbracket^{A}$ of two adjacent epochs, $P_{\{1,2\}}$ first calculate the absolute value of their difference:
\begin{align}\notag
|\llbracket  \mathcal{L}^{j+1} \rrbracket^{A}-\llbracket  \mathcal{L}^{j} \rrbracket^{A}|=&\mathrm{ReLU}(\llbracket  \mathcal{L}^{j+1} \rrbracket^{A}-\llbracket  \mathcal{L}^{j} \rrbracket^{A})\\\notag
+&\mathrm{ReLU}(\llbracket  \mathcal{L}^{j} \rrbracket^{A}-\llbracket  \mathcal{L}^{j+1} \rrbracket^{A}).
\end{align}
Then, the model convergence flag is calculated by $\llbracket msb(\alpha-|\llbracket  \mathcal{L}^{j} \rrbracket^{A}-\llbracket  \mathcal{L}^{j+1} \rrbracket^{A}|)\rrbracket^{B}$. $P_{\{1,2\}}$ open the flag to each other, and then decide whether to terminate the training. Subroutine \ref{alg:a6} describes our protocol for secure model convergence evaluation.

\vspace{-10pt}
\subsection{Putting Things Together}
\label{sec:train-infer}

\noindent\textbf{Secure training.} When training the GCN model, the cloud servers first securely normalize all nodes' initial features through \emph{secure feature normalization}. 
After that, the cloud servers securely perform the forward propagation (Eq. \ref{eq:GCN}) through \emph{secure neighboring states aggregation}, and \emph{secure activation functions} for each labeled node $v_{i}\in \mathcal{T}$ to obtain the inference results $\mathbf{Z}_{v_{i}, j}, j\in[1,C]$. 
Subsequently, the cloud servers securely calculate the average cross-entropy loss $\mathcal{L}$ between each labeled node's inference result $\mathbf{Z}_{v_{i}}$ and its true label $\mathbf{T}_{v_{i}}$ and then securely evaluate the model convergence.

\begin{algorithm}[!t]
	\caption{Secure Model Convergence Evaluation} 
	\label{alg:a6}
	\begin{algorithmic}[1] 
		\REQUIRE  $P_{\{1,2\}}$ hold all labeled nodes' encrypted inference results $\{\llbracket \mathbf{Z}_{v_{i}} \rrbracket^{A}\}_{v_{i}\in \mathcal{T}}$ and encrypted labels $\{\llbracket \mathbf{T}_{v_{i}} \rrbracket^{A}\}_{v_{i}\in \mathcal{T}}$, the public threshold $\alpha$, the public window size $\beta$ and the public maximum number of epochs $\gamma$.
		\ENSURE Nothing.
		\FOR{$e=1$ to $\gamma$}
		\STATE $\llbracket  \mathcal{L}^{e} \rrbracket^{A}=0$.
		
	    // $P_{\{1,2\}}$ calculate the approximate $\llbracket  \mathrm{ln}~ \mathbf{Z}_{v_{i},j}\rrbracket^{A}$:
	    
	    \FOR{each $v_{i}\in \mathcal{T}, j\in[1,C]$}
	    
	    \STATE $\llbracket   y_{0}\rrbracket^{A}=\frac{\llbracket 
	     \mathbf{Z}_{v_{i},j}\rrbracket^{A}}{120}-20\times\llbracket e^{-2\times   \mathbf{Z}_{v_{i},j}-1}\rrbracket^{A}+3$.
     
	    \FOR{$n=0$ to $\mathcal{N}$}
	    
	    \STATE $\llbracket   y_{n+1}\rrbracket^{A}=\llbracket   y_{n}\rrbracket^{A}-\sum_{k=1}^{\mathcal{K}}\frac{1}{k}\times(1-\llbracket  \mathbf{Z}_{v_{i},j}\rrbracket^{A}\times\llbracket e^{-   y_{n}}\rrbracket^{A})^{k}$.
	    
	    \ENDFOR
		\ENDFOR
		
		// $P_{\{1,2\}}$ calculate the cross-entropy loss $\llbracket  \mathcal{L}^{e} \rrbracket^{A}$:
		\FOR{each $v_{i}\in \mathcal{T}, j\in[1,C]$}
		\STATE $\llbracket  \mathcal{L}^{e} \rrbracket^{A}-=\llbracket \mathbf{T}_{v_{i},j} \rrbracket^{A}\times \llbracket  \mathrm{ln}~ \mathbf{Z}_{v_{i},j}\rrbracket^{A}$.
		\ENDFOR
		
		// $P_{\{1,2\}}$ determine whether to stop the training:
		\STATE $|\llbracket  \mathcal{L}^{e} \rrbracket^{A}-\llbracket  \mathcal{L}^{e-1} \rrbracket^{A}|=\mathrm{ReLU}(\llbracket  \mathcal{L}^{e} \rrbracket^{A}-\llbracket  \mathcal{L}^{e-1} \rrbracket^{A})+\mathrm{ReLU}(\llbracket  \mathcal{L}^{e-1} \rrbracket^{A}-\llbracket  \mathcal{L}^{e} \rrbracket^{A})$.
		\STATE 	flag=$\llbracket msb(\alpha-|\llbracket  \mathcal{L}^{j} \rrbracket^{A}-\llbracket  \mathcal{L}^{j+1} \rrbracket^{A}|)\rrbracket^{B}$.
		\STATE $stop~=(flag~==~1)~?~0:~(stop~+~1)$.
		
		\STATE \textbf{if} ($stop \geqslant \beta$) \textbf{then} terminating the training process.
		\ENDFOR
	\end{algorithmic}
\end{algorithm}

If convergence is not yet achieved, the cloud servers perform backward propagation to calculate each trainable weight's gradient $\frac{\partial\mathcal{L}}{\partial \mathbf{M}_{i,j}}$ followed by updating each weight using its gradients. Based on the chain rule \cite{he2012geometrical}, if the cloud servers can calculate the derivatives of all non-linear functions, they can calculate the complete derivative of Eq. \ref{eq:GCN}. In Eq. \ref{eq:GCN}, the first non-linear function is the cross-entropy loss function, and its derivative is:
\begin{equation}\notag
		\frac{\partial \mathcal{L}}{\partial \mathbf{Z}_{v_{i},j}}=-\frac{\mathbf{T}_{v_{i},j}}{\mathbf{Z}_{v_{i},j}}, v_{i}\in \mathcal{T},j\in[1,C],
\end{equation}
where the division can be securely calculated by using the design in Section \ref{sec:initia}. The second non-linear function is the $\mathrm{Softmax}$ function, and its derivative is:
\begin{align}\notag
	\frac{\partial z_{j}}{ \partial x_{i}}= \begin{cases}
		z_{j}(1-z_{j}) & \text{if } i=j,\\
		-z_{j}z_{i} & \text{if }i \ne j,
	\end{cases}
\end{align}
where $z_{j}=Softmax(x_{j})$, which can be securely calculated by using Subroutine \ref{alg:a5} in Section \ref{sec:activation}. The third non-linear function is the $ \mathrm{ReLU}$ function, and its derivative is: 
\begin{align}\notag
	\frac{\partial \mathrm{ReLU}(x)}{\partial x}=\begin{cases}
		0 & \text{if } x < 0,\\
		1 & \text{if } x > 0, 
	\end{cases}
\end{align}
which can be securely calculated by using the tailored PPA in Section \ref{sec:ReLU}. 
So this is the whole process of secure training in our system.

\noindent\textbf{Secure inference.} 
Secure inference for an unlabeled node corresponds to a forward propagation through the trained GCN model in the secret sharing domain.
In particular, the data owner provides the cloud servers with the ID of the unlabeled node. Upon receiving the ID, the cloud servers securely conduct the forward propagation process (i.e., Eq. \ref{eq:GCN}) in the secret sharing domain, and output the encrypted inference result about its label, which is then sent to the data owner for reconstruction.

\vspace{-10pt}
\section{Security Analysis}
\label{sec:security-analysis}

We follow the standard ideal/real world paradigm to analyze the security of {\secgnn}.
In the ideal/real world paradigm, a protocol is secure if the view of the corrupted party during the real execution of a protocol can be generated by a simulator given only the party’s input and legitimate output, which can be defined as follows:

\begin{definition}
	\label{def1}
	Let $P_{\{1,2,3\}}$ engage in a protocol $\pi$ which computes function $f:(\{0,1\}^{*})^{3}\rightarrow (\{0,1\}^{*})^{3}$. $P_{i}$'s view during the execution of protocol $\pi$  on inputs $\mathbf{x}$, denoted as $\mathrm{View}_{i}^{\pi}(\mathbf{x})$, consists of its input $\mathbf{in}_{i}$, its internal random values $\mathbf{r}_{i}$ and the messages $\mathbf{m}_{i}$ received during the execution. We say that $\pi$ computes $f$ with security in the semi-honest and non-colluding setting, if there exists a probabilistic polynomial time simulator $Sim$ such that for each $P_{i}$: $\mathrm{Sim}(\mathbf{in}_{i},f_{i}(\mathbf{x})) \mathop  \approx \limits \mathrm{View}_{i}^{\pi}(\mathbf{x})$.
\end{definition}

Recall that {\secgnn} consists of several secure sub-protocols: 1) secure division $\mathtt{secDIV}$; 2) secure array access $\mathtt{secACCESS}$; 3) secure square root $\mathtt{secROOT}$; 4) secure $\mathrm{ReLU}$ function $\mathtt{secRELU}$; 5) secure Softmax function $\mathtt{secSoftmax}$; 6) secure natural logarithm $\mathtt{secLOG}$. We use $Sim^{P_{i}}_{\mathtt{X}}$ to denote the simulator which can generate $P_{i}$'s view in sub-protocol $\mathtt{X}$ on corresponding input and output.

\begin{theorem}
	\label{theo1}
	Our {\secgnn} is secure according to Definition \ref{def1}.
\end{theorem}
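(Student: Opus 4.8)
The plan is to prove Theorem~\ref{theo1} by the standard hybrid argument for sequential modular composition of semi-honest, non-colluding protocols. First I would establish, for each of the six sub-protocols $\mathtt{secDIV}$, $\mathtt{secACCESS}$, $\mathtt{secROOT}$, $\mathtt{secRELU}$, $\mathtt{secSoftmax}$ and $\mathtt{secLOG}$, a simulation lemma: for every corrupted $P_i$ there is a PPT simulator $Sim^{P_i}_{\mathtt{X}}$ that, taking only $P_i$'s secret-shared input and its secret-shared output of $\mathtt{X}$, outputs a transcript computationally indistinguishable from $\mathrm{View}_i^{\mathtt{X}}$. With these in hand, I would observe that \secgnn{} (both the training and the inference flows assembled in Section~\ref{sec:train-infer}) is exactly a composition of calls to these sub-protocols interleaved with \emph{purely local} linear operations and Beaver-triple multiplications on additive shares, which add nothing to any single view beyond freshly masked openings. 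The global simulator then runs the sub-simulators in the order dictated by the protocol, threading the (simulated, uniformly random) output shares of each call in as the input shares of the next; indistinguishability of the whole follows from the composition theorem and a union bound over the sub-protocol hybrids.

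The engine behind every sub-simulator is the invariant that each message a party receives is either (i) a fresh $2$-out-of-$2$ additive share, hence uniform over $\mathbb{Z}_{2^k}$ (or $\mathbb{Z}_2$) and independent of the underlying secret, or (ii) a value one-time-padded by a fresh uniform mask or by a PRF output, hence (pseudo)random. For $\mathtt{secDIV}$, $\mathtt{secROOT}$, $\mathtt{secSoftmax}$ and $\mathtt{secLOG}$ this is immediate, since these routines consist only of local linear maps, the Newton--Raphson/limit/series iterations of Eqs.~\ref{recip}, \ref{eq:square_root}, \ref{eq:exp} and \ref{log}, and Beaver multiplications; the only values on the wire are the masked openings $e,f$ of each multiplication, which $Sim$ draws uniformly. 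For $\mathtt{secRELU}$ I would argue its two ingredients separately: the tailored PPA of Fig.~\ref{fig:ppa} is a Boolean circuit evaluated gate by gate, where \textsc{xor} gates are non-interactive and each \textsc{and} gate is a $\mathbb{Z}_2$ Beaver multiplication, so $Sim$ again samples uniform bits for the opened values; and the oblivious selection step (realized via $1$-out-of-$2$ OT, with $P_3$ supplying triples beforehand) is simulated by a single uniform element of $\mathbb{Z}_{2^k}$, because the message the receiver obtains is blinded by the fresh $r$ the sender chose, and symmetrically for the second round with roles swapped.

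The main obstacle, and where I expect most of the effort to go, is $\mathtt{secACCESS}$ (Subroutine~\ref{alg:a2}), which is genuinely three-party and rests on PRF-derived correlated randomness. Here I would treat the three corruption cases in turn. (a) $P_1$ receives nothing in the core protocol and only a fresh uniform $s$ in $P_3$'s re-sharing step, so its simulator is trivial. (b) $P_2$ receives $\langle h\rangle_1$ and $\langle\mathbf{a}''\rangle_2=(\langle\mathbf{a}\rangle_1\circlearrowright r^1+\mathbf{c}^1)\circlearrowright r^3$; the vector is blinded entrywise by $\mathbf{c}^1=\mathbb{F}(k_1,\cdot)-\mathbb{F}(k_3,\cdot)$, and $\langle h\rangle_1$ hides $I$ under the offset $r^1+r^3\bmod m$, where both $\mathbb{F}(k_3,\cdot)$ and $r^3$ depend on the key $k_3$ that $P_2$ does not hold. (c) $P_3$ receives $h$ and $\langle\mathbf{a}'\rangle_2=\langle\mathbf{a}\rangle_2\circlearrowright r^1+\mathbf{c}^2$, blinded by $\mathbf{c}^2=\mathbb{F}(k_2,\cdot)-\mathbb{F}(k_1,\cdot)$, together with $h=I+r^1+r^3\bmod m$, both of which depend on $k_1\notin\{k_2,k_3\}$. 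In each case the simulator outputs fresh uniform array entries (consistent with the given output share at the chosen position) and a fresh uniform index in $\mathbb{Z}_m$; indistinguishability is proved by a reduction to the pseudorandomness of $\mathbb{F}$, replacing the PRF outputs keyed by the one key the corrupted party lacks with truly random strings, after which the masks are perfectly uniform and independent of $(\mathbf{a},I)$. It is exactly here that the non-colluding assumption of Definition~\ref{def1} is used: each $P_i$ is ignorant of one of $k_1,k_2,k_3$, so no single corrupted server can derandomize its masks or rotation amounts. I would also note that $P_3$'s final re-sharing ($s$ and $\langle\mathbf{a}''[h]\rangle_3-s$) is simulated by one fresh uniform value, and that the input-preparation step of Section~\ref{subsec:sec-input-preparation} --- padding each neighbor list to length $d_{max}$ with zero-weight dummy edges --- is simulated with a generator that leaks nothing beyond $d_{max}$, which the threat model treats as public. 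Composing all sub-simulators as in the first paragraph then yields the global simulator required by Definition~\ref{def1}, which completes the proof.
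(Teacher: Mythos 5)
Your proposal is correct and follows essentially the same route as the paper: decompose \secgnn{} into the six sub-protocols, invoke sequential composition for semi-honest non-colluding parties, dismiss $\mathtt{secDIV}$, $\mathtt{secROOT}$, $\mathtt{secSoftmax}$ and $\mathtt{secLOG}$ as mere compositions of local linear operations and Beaver multiplications, and reduce the remaining work to per-party simulators for $\mathtt{secACCESS}$ (via the PRF-derived correlated randomness and the missing key $k_3$ or $k_1$) and $\mathtt{secRELU}$ (via the fresh mask $r$ in the oblivious selection) --- exactly the content the paper delegates to Theorems~\ref{theo2} and \ref{theo3}. Your write-up is in fact somewhat more explicit than the paper's (e.g.\ the reduction to PRF pseudorandomness and the treatment of $P_3$'s re-sharing step), but the underlying argument is the same.
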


\begin{proof}

It is noted that the inputs and outputs of each sub-protocol are secret shares, with each sub-protocol being invoked in order as per the processing pipeline.
If the simulator for each sub-protocol exists, then our complete protocol is secure \cite{curran2019procsa}. 
It is easy to see that the simulators $Sim_{\mathtt{X}}^{P_{i\in\{1,2,3\}}}$ ($\mathtt{X}\in\{\mathtt{secDIV}, \mathtt{secROOT}, \mathtt{secSoftmax} ,\mathtt{secLOG}$) must exist, because they are all calculated through approximations which are realized via basic operations (i.e., addition and multiplication) in the secret sharing domain.
Therefore, {\secgnn} is secure if the simulators for the remaining sub-protocols exist, i.e., $\mathtt{secACCESS}$ in Section \ref{sec:access} and $\mathtt{secRELU}$ in Section \ref{sec:ReLU}.
The existence of these simulators is given in \textbf{\emph{Theorem}} \ref{theo2} and \textbf{\emph{Theorem}} \ref{theo3}. 
\end{proof}

\begin{theorem}
	\label{theo2}
	The protocol $\mathtt{secACCESS}$ for secure neighboring states access is secure according to Definition \ref{def1}.
\end{theorem}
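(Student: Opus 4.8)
The plan is to exhibit, for each cloud server $P_i$, a PPT simulator $Sim^{P_i}_{\mathtt{secACCESS}}$ that, given only $P_i$'s protocol input and its prescribed output share, produces a transcript computationally indistinguishable from $\mathrm{View}_i^{\mathtt{secACCESS}}(\mathbf{x})$. I would first note that $P_i$'s view consists of its input, its internal randomness (its own PRF key, plus the mask $s$ if $i=3$), and the messages it receives in Subroutine~\ref{alg:a2} and in the re-sharing step; the internal randomness is simulated by trivial resampling, so the whole burden is on the received messages, each of which falls into one of three types: (i) a fresh PRF key delivered at setup ($k_3$ to $P_1$, $k_1$ to $P_2$, $k_2$ to $P_3$); (ii) an array or index masked additively by a pseudorandom quantity derived from a key the receiver does \emph{not} hold; and (iii) a message that is pinned down by $P_i$'s legitimate output together with quantities the simulator has already produced. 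Type (i) is simulated by sampling a uniform key (perfect, since real keys are uniform), type (ii) by sampling uniformly at random, and type (iii) by back-solving.

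The crux — and the step I expect to be the main obstacle — is justifying the type (ii) simulation despite the global correlation $\mathbf{c}^1+\mathbf{c}^2+\mathbf{c}^3=\mathbf{0}$ and the pairwise agreement of $r^1,r^3$. The key observation is that each server is missing exactly one key: $P_2$ lacks $k_3$, so in $\langle\mathbf{a}''\rangle_2=(\langle\mathbf{a}\rangle_1\circlearrowright r^1+\mathbf{c}^1)\circlearrowright r^3$ the term $-\mathbb{F}(k_3,\cdot)$ sitting inside $\mathbf{c}^1$ acts as a one-time-pad-like mask on the entire array, while $r^3=\mathbb{F}(k_3,\cdot)\bmod m$ masks $\langle h\rangle_1=(\langle I\rangle_1+r^1+r^3)\bmod m$; symmetrically $P_3$ lacks $k_1$, so $-\mathbb{F}(k_1,\cdot)$ inside $\mathbf{c}^2$ masks $\langle\mathbf{a}'\rangle_2$ and $r^1$ masks the index $h$; and $P_1$ receives no type (ii) message at all. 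Under PRF security of $\mathbb{F}$, replacing each such $\mathbb{F}(k_j,\cdot)$ by a truly random function turns every type (ii) message into an exact uniform draw; I would also remark, as is standard, that the reduction modulo $m$ adds only negligible statistical bias and is exact when $m\mid 2^{k}$.

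Assembling the simulators: for $P_1$, sample a uniform $k_3$ and output the given share $s$ as the message received in re-sharing — a perfect simulation, as $s$ is exactly $P_1$'s output. For $P_2$, sample a uniform $k_1$, a uniform $\langle h\rangle_1\in\mathbb{Z}_m$ and a uniform array $\langle\mathbf{a}''\rangle_2\in\mathbb{Z}_{2^{k}}^{m}$, recompute $h=(\langle h\rangle_1+\langle I\rangle_2)\bmod m$ exactly as $P_2$ does, and then, using $P_2$'s output share $y_2$, set the message received from $P_3$ in re-sharing to $y_2-\langle\mathbf{a}''[h]\rangle_2$. For $P_3$, sample a uniform $k_2$, a uniform $h\in\mathbb{Z}_m$ and a uniform array $\langle\mathbf{a}'\rangle_2\in\mathbb{Z}_{2^{k}}^{m}$, and let $P_3$'s local steps (adding $\mathbf{c}^3$, rotating by $r^3$, drawing $s$) run on these as in the real protocol; since $P_3$ holds no output share after re-sharing, no back-solving is needed. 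The one genuine consistency check is for $P_2$: in the real run $\langle\mathbf{a}''[h]\rangle_3-s$ is uniform and independent of the rest of $P_2$'s view because $s$ is fresh randomness of $P_3$, so conditioning on the simulated value of $\langle\mathbf{a}''[h]\rangle_2$ the back-solved message is uniform and makes $P_2$'s output equal to $y_2$, matching the real distribution.

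I would finish with a standard hybrid argument: let $H_0$ be the real view and $H_1$ the view with the relevant PRF invocations replaced by random functions — indistinguishable by PRF security via a polynomial number of sub-hybrids over the keys. In $H_1$ every masked message is information-theoretically uniform, so the simulator's output is identically distributed (up to the negligible modular bias) to $H_1$; chaining the two steps yields $Sim^{P_i}_{\mathtt{secACCESS}}(\mathbf{in}_i,f_i(\mathbf{x}))\stackrel{c}{\approx}\mathrm{View}_i^{\mathtt{secACCESS}}(\mathbf{x})$ for every $i$, which is exactly Definition~\ref{def1}. Beyond the type (ii) analysis flagged above, the second-hardest point is making precise that the pairwise-agreed $r^1,r^3$ and the zero-sum $\mathbf{c}^i$ leak nothing to a single honest-but-curious server — i.e. that missing one key really does suffice to pseudorandomize everything that server observes.
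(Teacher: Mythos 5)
Your proposal is correct and follows essentially the same route as the paper's proof: per-party simulators, with each received (array or index) message argued pseudorandom because the observing server lacks exactly one PRF key, so the corresponding term in $\mathbf{c}^{i}$ (resp. in $r^{1},r^{3}$) acts as a mask. You are in fact somewhat more complete than the paper, which treats $P_{1}$ as receiving nothing and never simulates the re-sharing messages ($s$ to $P_{1}$ and $\langle\mathbf{a}''[h]\rangle_{3}-s$ to $P_{2}$) or spells out the PRF-to-random-function hybrid, both of which you handle explicitly.
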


\begin{proof}

We consider the simulator of $P_{1}$, $P_{2}$ and $P_{3}$ in turn.
\begin{itemize}
\item 	$Sim^{P_{1}}_{\mathtt{secACCESS}}$: The simulator is simple since $P_{1}$ receives nothing in the real execution. Therefore, it is clear that the simulated view is identical to the real view.

\item $Sim^{P_{2}}_{\mathtt{secACCESS}}$: To analyze $P_{2}$'s view, we see that $P_{2}$ has $k_{1}, k_{2}$, and shares $ \langle I \rangle_{2}$,  $\langle  \mathbf{a}\rangle_{2}$ at the beginning, and later receives new shares $ \langle \mathbf{a}''\rangle_{2}$ and $\langle h \rangle_{1}$ in step 1). In the simulated view, $P_{2}$ receives random values in step 1).  Therefore, we need to prove that $ \langle \mathbf{a}''\rangle_{2}$ and $\langle h \rangle_{1}$ are uniformly random in the view of  $P_{2}$.
\begin{itemize}
\item $ \langle \mathbf{a}''\rangle_{2}$ are uniformly random in $P_{2}$'s view: Firstly, $\langle  \mathbf{a}'' \rangle_{2}=(\langle  \mathbf{a} \rangle _{1}\circlearrowright r^{1} +\mathbf{c}^{1})\circlearrowright r^{3}$
and $\mathbf{c}^{1}[j]=\mathbb{F}(k_{1},j)-\mathbb{F}(k_{3},j), j\in[1,m]$.
Though $P_{2}$ has $k_{1}$, it does not have $k_{3}$, thus $\mathbb{F}(k_{3},j)$ is uniformly random in $P_{2}$'s view. It implies that $\mathbf{c}^{1}[j]$ is also uniformly random in $P_{2}$'s view since $\mathbb{F}(k_{3},j)$ is independent of $\mathbb{F}(k_{1},j)$ used in the generation of $\mathbf{c}^{1}[j]$ \cite{araki2016high}. Similarly, the array $\langle  \mathbf{a}'' \rangle_{2}$ is uniformly random in $P_{2}$'s view since $\mathbf{c}^{1}$ is independent of $\langle  \mathbf{a}\rangle_{1}$ used in the generation of $\langle  \mathbf{a}'' \rangle_{2}$. Therefore, the distribution over the real $\langle  \mathbf{a}'' \rangle_{2}$ received by $P_{2}$ in the protocol execution and over the simulated $\langle  \mathbf{a}'' \rangle_{2}$ generated by the simulator is identically distributed. 

\item $\langle h \rangle_{1}$ is uniformly random in $P_{2}$'s view: In a similar way, $ \langle h\rangle _{1}=\langle I\rangle_{1}+r^{1}+r^{3}$, where $r^{1}=\mathbb{F}(k_{1},j)$ and $r^{3}=\mathbb{F}(k_{3},j)$. Though $P_{2}$ has $k_{1}$, it does not have $k_{3}$, thus $r^{3}$ is uniformly random in $P_{2}$'s view, furthermore, $ \langle h\rangle _{1}$ is uniformly random in $P_{2}$'s view. Therefore, the distribution over $ \langle h\rangle _{1}$ received by $P_{2}$ in the protocol execution and over the $ \langle h\rangle _{1}$ generated by the simulator is identically distributed.
\end{itemize}

\item  $Sim^{P_{3}}_{\mathtt{secACCESS}}$: To analyze $P_{3}$'s view, we see that $P_{3}$ has $k_{2}, k_{3}$ at the beginning, and later receives share $\langle   \mathbf{a}' \rangle_{2}$ and $h$ in step 3). It is noted that the proof of $Sim^{P_{3}}_{\mathtt{secACCESS}}$ is similar to the proof of $Sim^{P_{2}}_{\mathtt{secACCESS}}$ since $P_{3}$ and $P_{2}$ receive similar messages during the protocol execution, thus we omit the proof of $Sim^{P_{3}}_{\mathtt{secACCESS}}$.
\end{itemize}
\end{proof}

\begin{theorem}
	\label{theo3}
	The protocol $\mathtt{secRELU}$ for the $\mathrm{ReLU}$ function is secure according to Definition \ref{def1}.
\end{theorem}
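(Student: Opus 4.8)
I would prove Theorem \ref{theo3} by exhibiting, for each corrupted party $P_{i}$, a probabilistic polynomial‑time simulator $Sim^{P_{i}}_{\mathtt{secRELU}}$ that on input $P_{i}$'s share‑inputs and share‑outputs produces a view computationally indistinguishable from $\mathrm{View}_{i}^{\mathtt{secRELU}}$. Since the input and output of $\mathtt{secRELU}$ are fresh $2$‑out‑of‑$2$ additive shares, the output part of the view is reproduced simply by sampling uniform elements of $\mathbb{Z}_{2^{k}}$ (resp. $\mathbb{Z}_{2}$ for binary shares), so the real work is to simulate the \emph{messages received} during the run. I would structure the argument around the two phases of Subroutine \ref{alg:a4}: phase (a), extraction of $msb(x)$ via the tailored PPA (built from secret‑shared AND gates consuming $P_{3}$'s precomputed triples, plus free XOR gates); and phase (b), the two symmetric rounds in which $P_{1}$ and $P_{2}$ send the masked candidate values $m_{b}$ and $m_{b}'$.

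First I would dispatch $P_{3}$: in $\mathtt{secRELU}$ it only supplies the AND triples in advance and never participates online, so $\mathrm{View}_{3}$ is just its own random coins and $Sim^{P_{3}}_{\mathtt{secRELU}}$ resamples them, giving an identical distribution. For $P_{1}$ (and symmetrically $P_{2}$) I would argue phase by phase. In phase (a), the only things $P_{1}$ obtains are its uniform shares of the triples from $P_{3}$ and, for each AND gate, $P_{2}$'s share $\langle e\rangle_{2}=\langle x\rangle_{2}\oplus\langle u\rangle_{2}$ (and likewise $\langle f\rangle_{2}$) of the opened masked bits, where $u,v$ are fresh uniform triple secrets whose second shares $P_{1}$ does not hold; hence every such message is uniform and independent in $P_{1}$'s view, so the simulator substitutes fresh random bits. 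This is exactly the standard simulator for secret‑shared multiplication, and sequential composition over the $O(k)$ gates of the PPA yields a simulator for the whole MSB sub‑step. In phase (b), $P_{1}$ receives the pair $(m_{0}',m_{1}')$ with $m_{b}'=(b\oplus\langle msb(x)'\rangle_{2})\cdot\langle x\rangle_{2}-r'$, where $r'$ is a fresh uniform element chosen by $P_{2}$ and unknown to $P_{1}$, and $\langle x\rangle_{2}$ is a share $P_{1}$ does not hold; for either value of $\langle msb(x)'\rangle_{2}$ the map $(r',\langle x\rangle_{2})\mapsto(m_{0}',m_{1}')$ is an invertible affine transformation of $\mathbb{Z}_{2^{k}}^{2}$, so $(m_{0}',m_{1}')$ is uniform on $\mathbb{Z}_{2^{k}}^{2}$ and independent of both $msb(x)'$ and $x$, and $Sim^{P_{1}}_{\mathtt{secRELU}}$ samples two independent uniform ring elements. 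The local share‑flip and the local $m_{b}$‑selection produce no messages. Concatenating these pieces gives $Sim^{P_{1}}_{\mathtt{secRELU}}$, and $Sim^{P_{2}}_{\mathtt{secRELU}}$ is obtained by swapping the roles of $(r',\langle x\rangle_{2},\langle msb(x)'\rangle_{2})$ with $(r,\langle x\rangle_{1},\langle msb(x)'\rangle_{1})$.

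The step I expect to be the main obstacle is phase (b): one must check that sending \emph{both} candidates $m_{0},m_{1}$ — rather than a single obliviously‑selected value — still leaks nothing, i.e. that the \emph{joint} (not merely marginal) distribution of the received pair is simulatable. The affine‑bijection observation above is what makes this go through, and it crucially relies on the mask $r$ (resp. $r'$) being sampled freshly and independently per invocation and on the sender's feature share being hidden from the receiver, so I would be careful to state both facts explicitly in the write‑up. Everything else is routine: since $\mathtt{secRELU}$ is the sequential composition of the PPA sub‑step and the two exchange rounds, each admitting a simulator, the same sequential‑composition argument over secret shares used in the proof of Theorem \ref{theo1} yields security of $\mathtt{secRELU}$ in the sense of Definition \ref{def1}.
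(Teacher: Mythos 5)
Your overall strategy is the same as the paper's: a per-party simulator, with $P_{3}$ and the non-receiving party handled trivially, the PPA phase reduced to the standard Beaver-triple multiplication simulator, and the exchanged pair $(m_{0},m_{1})$ argued to look uniform because of the fresh mask. Your treatment of phase (a) and of $P_{3}$ is in fact more explicit than the paper's, which simply declares the MSB/PPA part ``obviously'' secure and concentrates on the exchange step.

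The difficulty is exactly the step you yourself flag as the main obstacle, and the affine-bijection fix does not close it. Concluding that $(m_{0}',m_{1}')$ is uniform on $\mathbb{Z}_{2^{k}}^{2}$ from the bijectivity of $(r',\langle x\rangle_{2})\mapsto(m_{0}',m_{1}')$ requires the domain point to be uniform on $\mathbb{Z}_{2^{k}}^{2}$; $r'$ is fresh randomness, but $\langle x\rangle_{2}$ is not --- it is part of the protocol input $\mathbf{x}$ over which Definition \ref{def1} quantifies, hence a fixed value available to the distinguisher. Concretely, because the \emph{same} mask $r'$ is subtracted from both candidates, $m_{1}'-m_{0}'=\bigl((1\oplus s)-(0\oplus s)\bigr)\langle x\rangle_{2}=(1-2s)\langle x\rangle_{2}=\pm\langle x\rangle_{2}$ with $s=\langle msb(x)'\rangle_{2}$; a distinguisher holding $\mathbf{x}$ simply tests whether the difference of the received pair equals $\pm\langle x\rangle_{2}$, which holds with certainty in the real view and with probability about $2^{1-k}$ for two independent uniform samples. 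So the pair is not jointly simulatable, and the receiver $P_{1}$ in fact learns the sender's share of $x$ up to sign (equivalently, $x$ up to two candidates, since $P_{1}$ also holds $\langle x\rangle_{1}$). The paper's own proof only asserts marginal uniformity of each $m_{b}$ and is silent on the joint distribution, so your proposal is more honest about where the crux lies --- but as written the phase-(b) argument is unsound, and it cannot be repaired by a better simulator alone: one would have to change the protocol (e.g., mask the two candidates with independent randomness and deliver only the selected one obliviously, adjusting how the sender's output share is defined).
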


\begin{proof}
	Obviously, the $\llbracket msb(\llbracket x\rrbracket^{A})\rrbracket^{B}$ function is secure since the tailored PPA consists of basic AND and XOR gates, so we only prove that $\llbracket msb(x)\rrbracket^{B}\times \llbracket x\rrbracket^{A}$ function is secure. In the case of $P_{1}$ acting as the sender and $P_{2}$ acting as the receiver, we consider the simulator of $P_{1}$, $P_{2}$ and $P_{3}$  in turn.
	\begin{itemize}
		\item $Sim^{P_{1}}_{\mathtt{secRELU}}$: The simulator is simple since $P_{1}$ receives nothing in the real execution. Therefore, it is clear that the simulated view is identical to the real view.
		
		\item $Sim^{P_{2}}_{\mathtt{secRELU}}$: To analyze $P_{2}$'s view, we see that $P_{2}$ has $\langle msb(x)'\rangle_{2}$ and $\langle x\rangle_{2}$ at the beginning, and later receives messages $m_{b}:=(b\oplus \langle  msb(x)' \rangle_{1})\times \langle x\rangle_{1}-r, b\in\{0,1\}$. In the simulated view, $P_{2}$ receives two random values. Therefore, we need to prove that $m_{\{1,2\}}$ are uniformly random in the view of  $P_{2}$. Obviously, the above claim is valid, because $r$ is uniformly random in $P_{2}$'s view, which implies that $m_{\{1,2\}}$ are also uniformly random in $P_{2}$'s view since $r$ is independent of other values used in the generation of $m_{\{1,2\}}$. Therefore, the distribution over the real $m_{\{1,2\}}$ received by $P_{2}$ in the protocol execution and over the simulated $m_{\{1,2\}}$ generated by the simulator is identically distributed. 
		
		\item $Sim^{P_{3}}_{\mathtt{secRELU}}$: The simulator is simple since $P_{3}$ does not participate in the protocol and receives nothing in the real execution. Therefore, it is clear that the simulated view is identical to the real view.
	\end{itemize}
	Similarly, in the case of $P_{2}$ acting as the sender and $P_{1}$ acting as the receiver, the protocol is also secure. 
\end{proof}

\noindent \textbf{Discussion.}
As the first research endeavor towards privacy-preserving training and inference of GNNs outsourced to the cloud, the current design of {\main} only considers the commonly assumed non-colluding and semi-honest threat model, where the three cloud servers $P_{\{1,2,3\}}$ will not collaboratively launch inference attacks, e.g., model inversion attack \cite{zhang2022model}.
On another hand, we are aware that there exist effective mechanisms for bounding information leakage even if $P_{\{1,2,3\}}$ collude with each other, which can also be smoothly integrated into {\main} for security enhancement.
Specifically, we observe that local differential privacy (LDP) \cite{kasiviswanathan2011can} and dummy edges padding are promising techniques, of which the blueprint is as follows.
It is noted that the private information in the graph-structured data that needs to be protected is the node features and labels and the edges between nodes.
Firstly, before encrypting the graph-structured data,  the data owner perturbs the node features and labels by the LDP-based obfuscation mechanism \cite{sajadmanesh2021locally}, which is specifically designed for GNNs.
Secondly, the data owner adds dummy edges with random weights between some pairs of unconnected nodes in the graph-structured data to obfuscate the existence and weights of edges.  
Finally, the data owner encrypts
the graph-structured data after obfuscation by the encryption method introduced in Section \ref{subsec:sec-input-preparation}.
Since the node features and labels and the edges between nodes in the graph-structured data are obfuscated, $P_{\{1,2,3\}}$ cannot learn the accurate original graph-structured data even if they collude with each other.
So the above is the blueprint for prevent $P_{\{1,2,3\}}$ from colluding with each other to launch inference attacks in {\main}, for which it is important to explore how to make the decreased accuracy of the trained GNN model (a natural trade-off) as small as possible upon concrete realizations.

\section{Experiments}
\label{sec:experiments}

\subsection{Setup}

The implementation is written in C++ using the standard library. 
All experiments are performed on a workstation with Intel Core i7-10700K and 64GB RAM running Ubuntu 20.04.2 LTS. Consistent with prior art \cite{riazi2019xonn,tan2021cryptgpu}, we consider a Local Area Network (LAN) environment with a network bandwidth of 625MB/s and an average latency of 0.22 ms. For all experiments, we split our computation and communication into data-dependent online phase and data-independent offline phase, and report the end-to-end protocol execution time and the total communication traffic.
Our implementation is available at \url{https://github.com/songleiW/SecGNN}.

\noindent\textbf{Graph datasets.} We use three graph datasets commonly used in GCN: Citeseer\footnote{\url{https://linqs-data.soe.ucsc.edu/public/lbc/citeseer.tgz}}, Cora\footnote{\url{https://linqs-data.soe.ucsc.edu/public/lbc/cora.tgz}} and Pubmed\footnote{\url{https://linqs-data.soe.ucsc.edu/public/Pubmed-Diabetes.tgz}} in our experiments. Their statistics are summarized in Table \ref{Tab:dataset}.

\begin{table}
	\normalsize 
	\centering
	\caption{Dataset Statistics}
	\label{Tab:dataset}
	\begin{tabular*}{\hsize}{@{}@{\extracolsep{\fill}}cccccc@{}}
		\toprule
		Dataset &Nodes&Edges&$d_{max}$&Classes&Features \\\midrule
		Citeseer&3,327&4,732&100&6&3,703\\ \hline
		Cora&2,708&5,429&169&7&1,433\\ \hline
		Pubmed& 19,717&44,338&171&3&500\\ 
		\bottomrule
	\end{tabular*}
\vspace{-10pt}
\end{table}

\noindent\textbf{Model hyperparameters.}  Similar to \cite{GCN}, we use the two-layer GCN described in Eq. \ref{eq:GCN}. For training, we use 40 labeled samples per class but use feature vectors of all nodes. We perform batch gradient descent using the full training set for each epoch. The learning rate is 0.2 and the size of the hidden layer is 16. Early stopping with a window size 5 and public threshold 0.02. We randomly initialize model parameters by the uniform distribution $\mathbf{M}^{(0)}\sim(\frac{-1}{\sqrt{E}}, \frac{1}{\sqrt{E}})$, 
where $E$ is the number of neurons. We use the same hyperparameters in plaintext and {\secgnn}.

\noindent \textbf{Protocol instantiation.} We instantiate the sub-protocols in Section \ref{sec:main} using the following parameter settings. Machine learning algorithms usually perform on real numbers, while the additive secret sharing is restricted to computations over integers. Following previous works \cite{mohassel2018aby3,tan2021cryptgpu }, we use a fixed-point encoding of real numbers in our secure protocols. Specifically, for a real number $x$, we consider a fixed-point encoding with $t$ bits of precision: $\lfloor x\cdot 2^{t} \rceil$. Note that when multiplying two fixed-point encoding numbers, since both of them are multiplied by $2^{t}$, the two parties additionally need to rescale the product scaled by $2^{2t}$, where we use the truncation technique from \cite{mohassel2017secureml}. In our experiments, we consider the ring $\mathbb{Z}_{2^{64}}$ with $t = 15$ bits of precision. The number of iterations of Eq. \ref{recip} is set to 13, Eq. \ref{eq:exp} is set to 8, Eq. \ref{eq:square_root} is set to 18, Eq. \ref{log} is set to 3 and $k$ is set to 8.
\subsection{Evaluation on Secure GNN Training}

\noindent\textbf{Cross-entropy loss.} We first compare the cross-entropy loss between {\secgnn} and plaintext training. The results are summarized in Fig. \ref{fig:entropy}. It is observed that the cross-entropy loss of {\secgnn} is slightly higher than that of plaintext, but they exhibit consistent behavior.
Meanwhile, it is revealed that the training processes of {\secgnn} and plaintext terminate at the same number of epochs, which demonstrates that {\secgnn}, with security assurance, does not adversely affect the convergence of the training process.
This, in turn, also validates the effectiveness of our secure model convergence evaluation protocol in Section \ref{sec:stop}.

\begin{figure}[t]
	\begin{minipage}[t]{0.32\linewidth}
		\centering{\includegraphics[width=\linewidth]{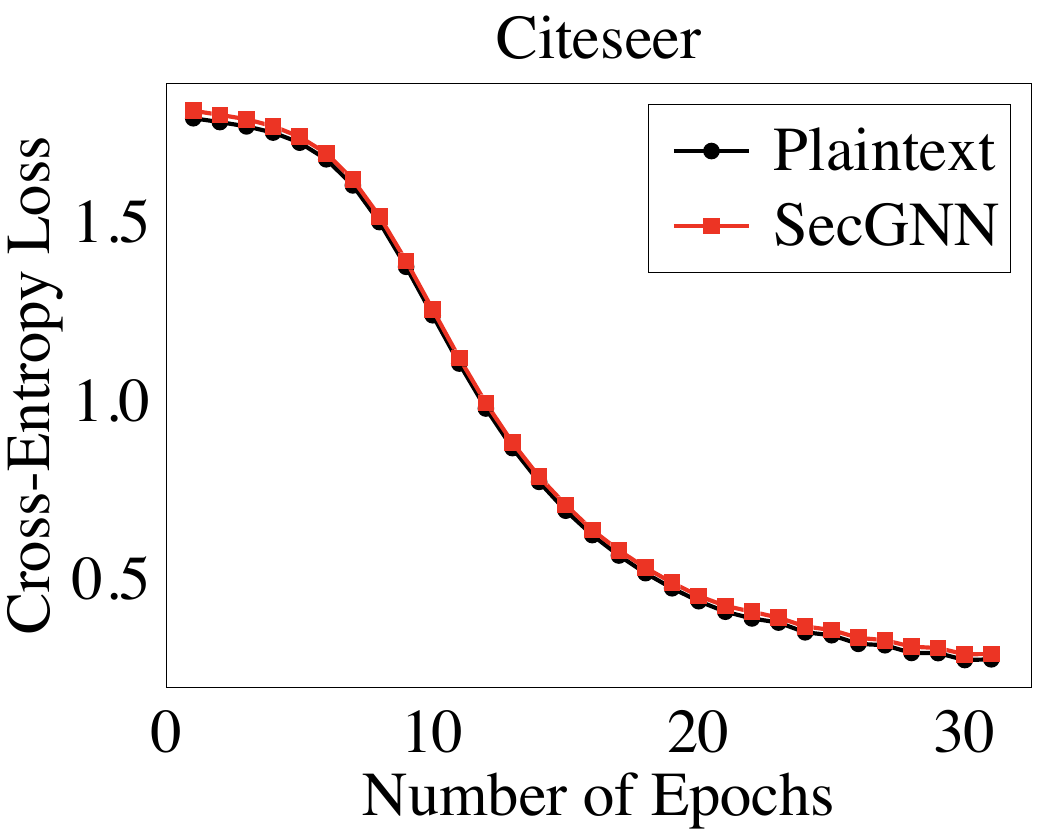}}
	\end{minipage}
	\begin{minipage}[t]{0.32\linewidth}
		\centering{\includegraphics[width=\linewidth]{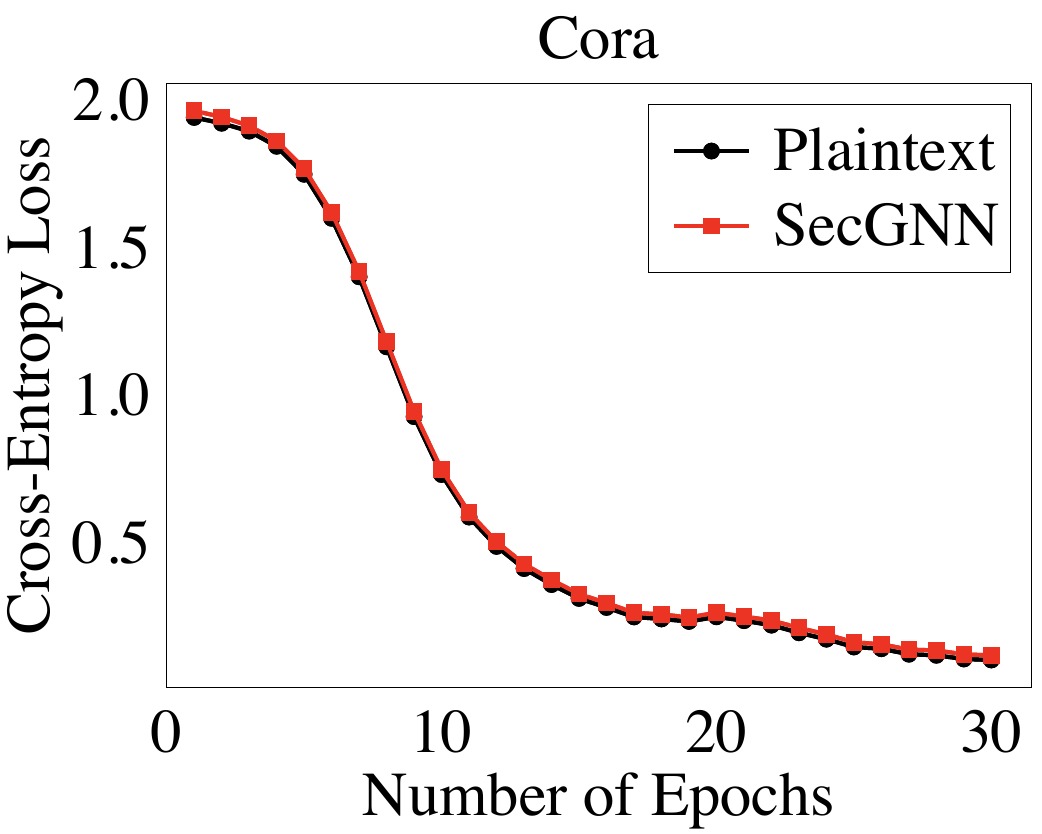}}
	\end{minipage}
	\begin{minipage}[t]{0.32\linewidth}
		\centering{\includegraphics[width=\linewidth]{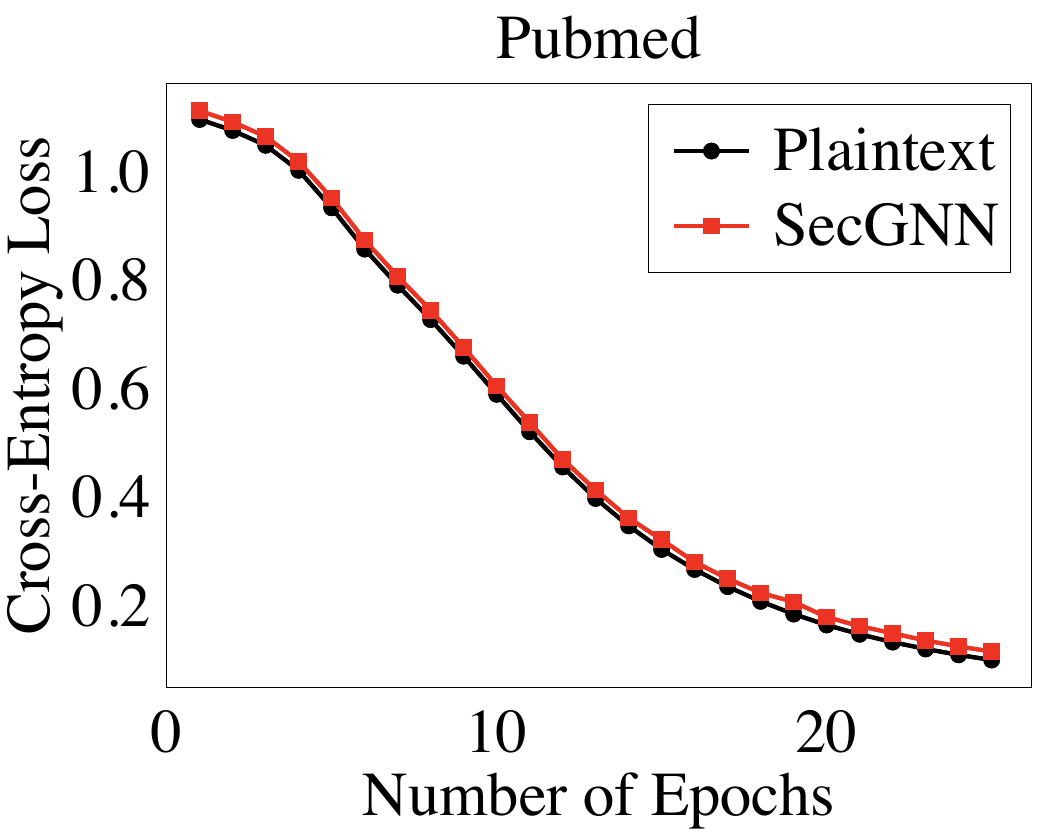}}
	\end{minipage}
	\caption{Evolution of the cross-entropy loss in SecGNN and plaintext, with varying number of epochs over different datasets.}
\label{fig:entropy}
\vspace{-10pt}
\end{figure}

\noindent\textbf{Validation set accuracy.} In addition to comparing the evolution of the cross-entropy loss, we first evaluate and compare the validation set (500 samples excluding the training samples) accuracy between {\secgnn} and plaintext. The results are summarized in Fig. \ref{fig:Valida}. It can be seen that although the difference in the validation set accuracy between {\secgnn} and plaintext is obvious at the very beginning, the difference rapidly decreases as the number of epochs grows and eventually vanishes.

\begin{figure}[t]
	\begin{minipage}[t]{0.32\linewidth}
		\centering{\includegraphics[width=\linewidth]{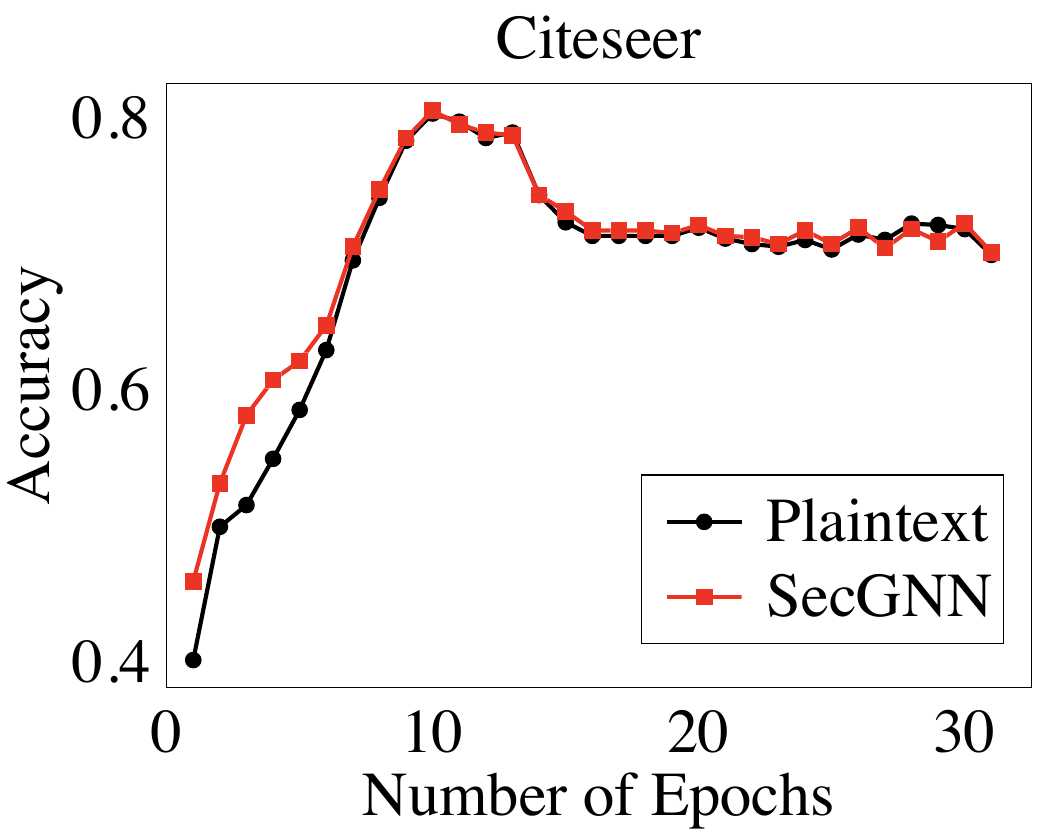}}
	\end{minipage}
	\begin{minipage}[t]{0.32\linewidth}
		\centering{\includegraphics[width=\linewidth]{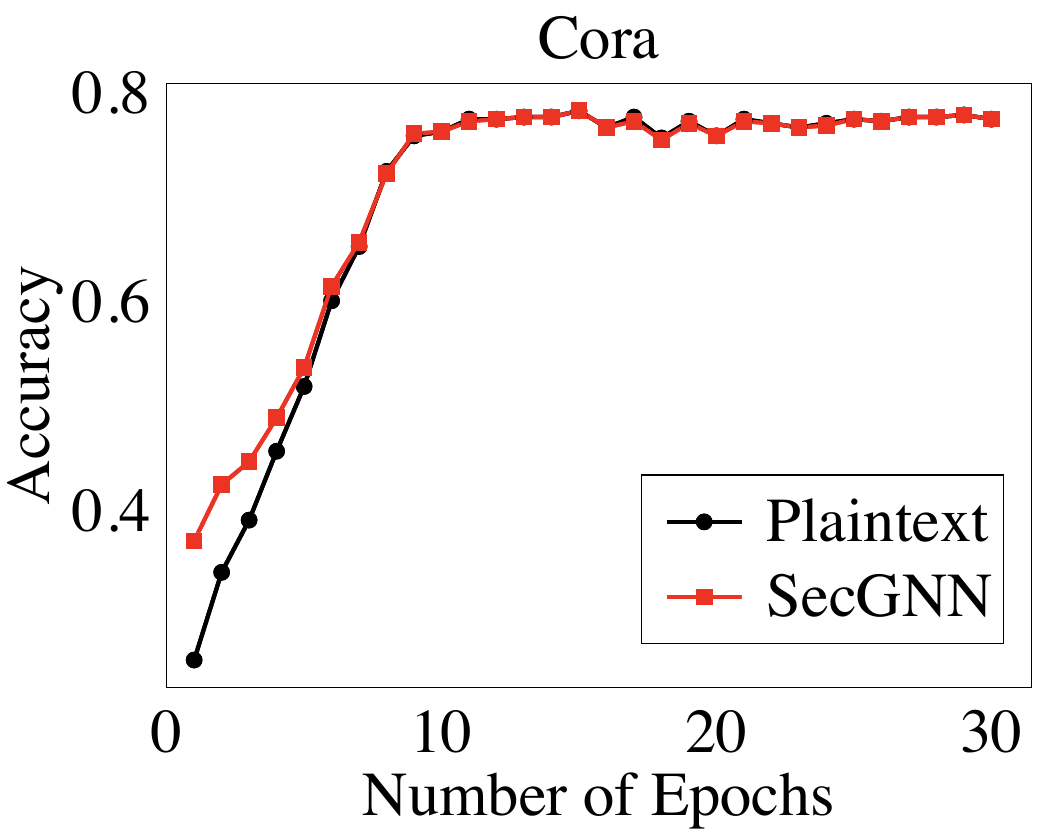}}
	\end{minipage}
	\begin{minipage}[t]{0.32\linewidth}
		\centering{\includegraphics[width=\linewidth]{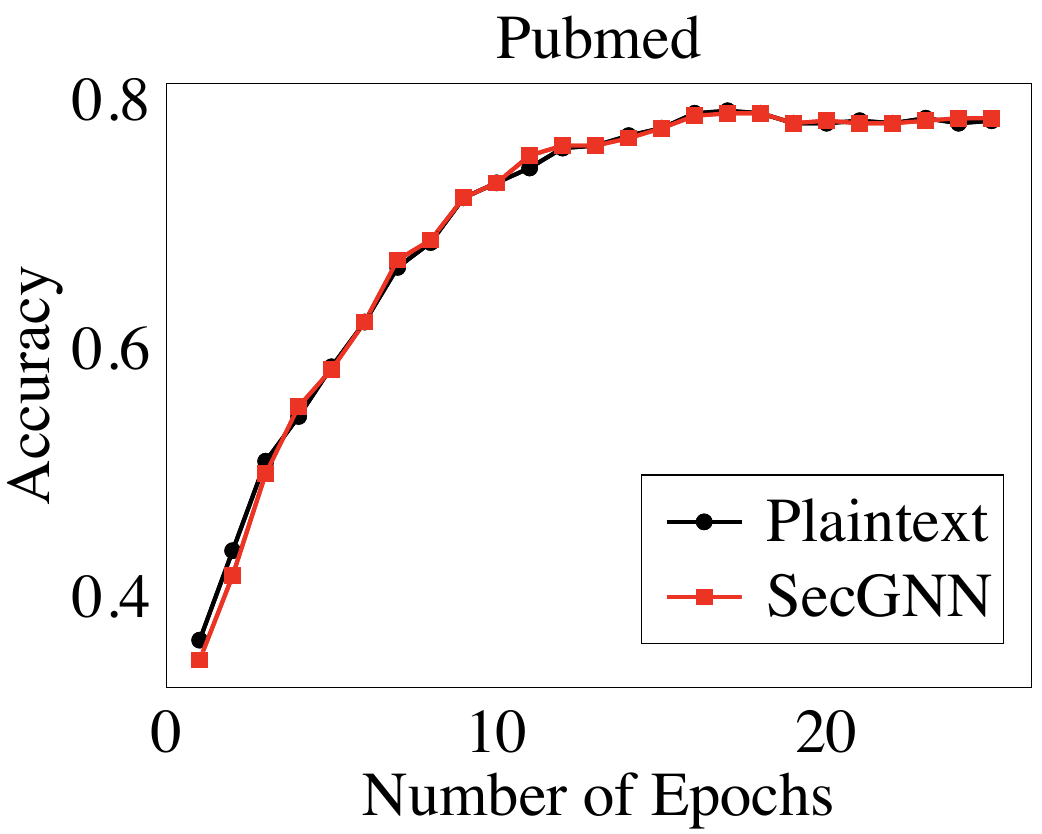}}
	\end{minipage}
	\caption{Evolution of the validation set accuracy in SecGNN and plaintext, with varying number of epochs over different datasets.}
	\label{fig:Valida}
	\vspace{-10pt}
\end{figure}

\begin{table*}
	\normalsize 
	\centering
	\caption{SecGNN's Computation and Communication Performance for Secure GNN Training and Inference}
	\label{Tab:tra_infer_overhead}
	\begin{tabular*}{\hsize}{@{}@{\extracolsep{\fill}}ccccccccc@{}}
		\toprule
		Dataset&\multicolumn{4}{c}{Training}&\multicolumn{4}{c}{Inference (a single unlabeled node)}\\\hline
		
		&   \multicolumn{2}{c}{Time (seconds)}&\multicolumn{2}{c}{Comm. (GB)}&   \multicolumn{2}{c}{Time (seconds)}&\multicolumn{2}{c}{Comm. (GB)}\\ 
		&  Online &Offline& Online &Offline&  Online &Offline& Online &Offline\\ \hline
		Citeseer&  5,640 &168& 9.1 &13.5&  49.7 &13.1& 0.7 &1\\ \hline
		
		Cora&  2,664 &54& 3.6 &5.3&  25.3&6.7& 0.4 &0.6\\ \hline
		
		Pubmed&  1,872 &72&5.1 &7.3& 69.6 &18.6 & 1 &1.5\\ \hline

	\end{tabular*}
\vspace{-10pt}
\end{table*}

\noindent\textbf{Computation and communication performance.} We now report {\secgnn}'s computation and communication performance in secure training. The results are given in Table \ref{Tab:tra_infer_overhead}, where the number of training epochs on the three datasets is as follows: Citeseer: 30, Cora: 30, and Pubmed: 25 (as shown in Fig. \ref{fig:entropy}). Over the three tested datasets, the online communication traffic in SecGNN ranges from 3.6 GB to 9.1 GB, and the online end-to-end training time varies from 31.2 minutes to 94 minutes. It is noted that the secure training procedure in SecGNN is full conducted on the cloud and the cost is one-off.
\subsection{Evaluation on Secure GNN Inference}
	\begin{table}
		\normalsize 
		\centering
		\caption{Inference Accuracy Performance}
		\label{Tab:accuracy_inference}
		\begin{tabular*}{\hsize}{@{}@{\extracolsep{\fill}}cccc@{}}
			\toprule
			Dataset& & Accuracy &Average relative error\\ \hline
			
			\multirow{2}{*}{Citeseer} &
			{\secgnn} & 68.3\% &\multirow{2}{*}{0.12\%}\\ 
			\cline{2-3}
			&Plaintext &68.3\%\\ \hline
			
			\multirow{2}{*}{Cora} &
			{\secgnn} & 78\%  &\multirow{2}{*}{0.11\%}  \\ 
			\cline{2-3}
			&Plaintext & 78\%\\ \hline
			
			\multirow{2}{*}{Pubmed} &
			{\secgnn} & 78.6\%&\multirow{2}{*}{0.12\%} \\ 
			\cline{2-3}
			&Plaintext & 78.6\%  \\ \hline
		\end{tabular*}
	\vspace{-10pt}
	\end{table}
	
\noindent\textbf{Inference accuracy.} We evaluate the Top-1 inference accuracy in {\secgnn} which performs inference with models trained in the ciphertext domain via our protocols, and compare it against with plaintext inference which is based on models trained over plaintext graphs. 
In addition, we compare the average relative error in inference results between {\secgnn} and plaintext.
Table \ref{Tab:accuracy_inference} summarizes the results, from which we can observe that the Top-1 accuracy of {\secgnn} \textit{exactly} matches that of plaintext.

\begin{figure}[t!]
	\begin{minipage}[t]{0.5\linewidth}
		\centering{\includegraphics[width=\linewidth]{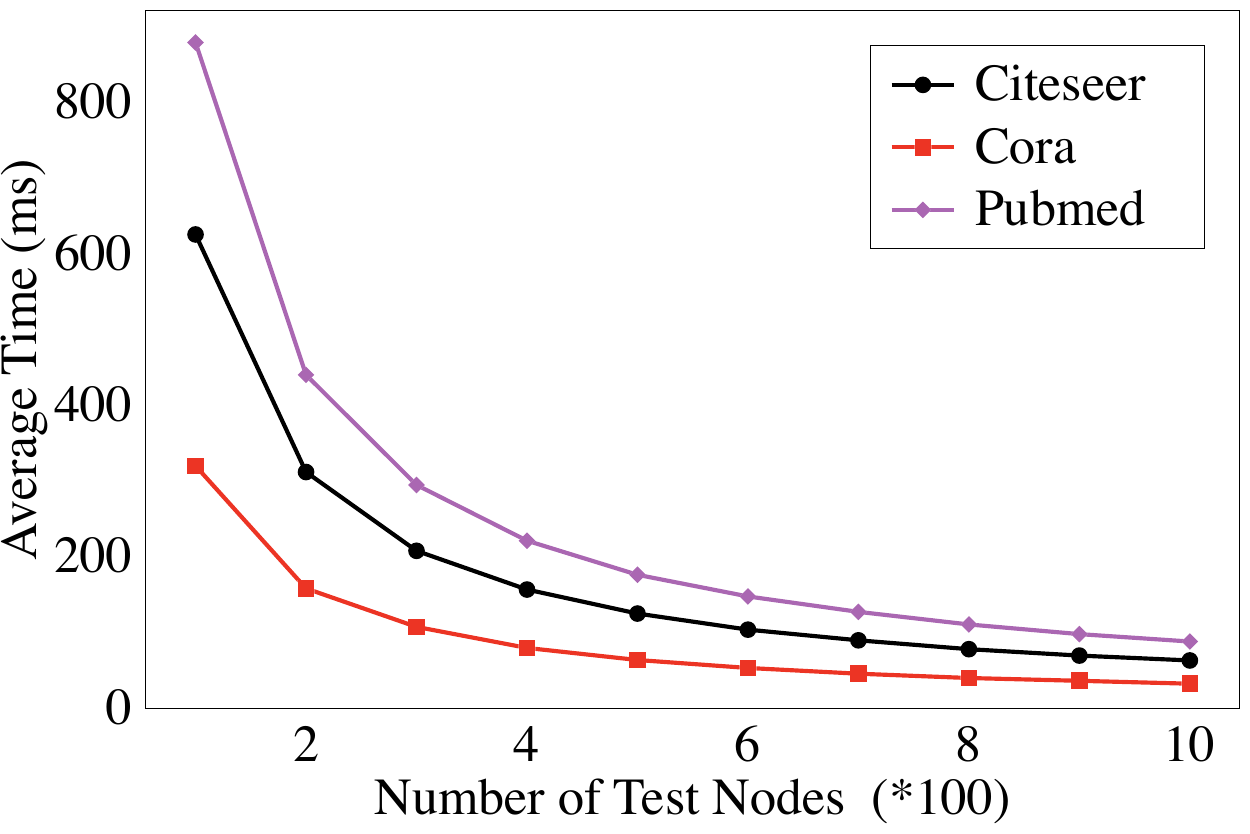}}
	\caption{Amortized runtime cost of secure inference as we vary the number of test nodes.}
		\label{fig:infer_time}
	\end{minipage}
	\begin{minipage}[t]{0.5\linewidth}
		\centering{\includegraphics[width=\linewidth]{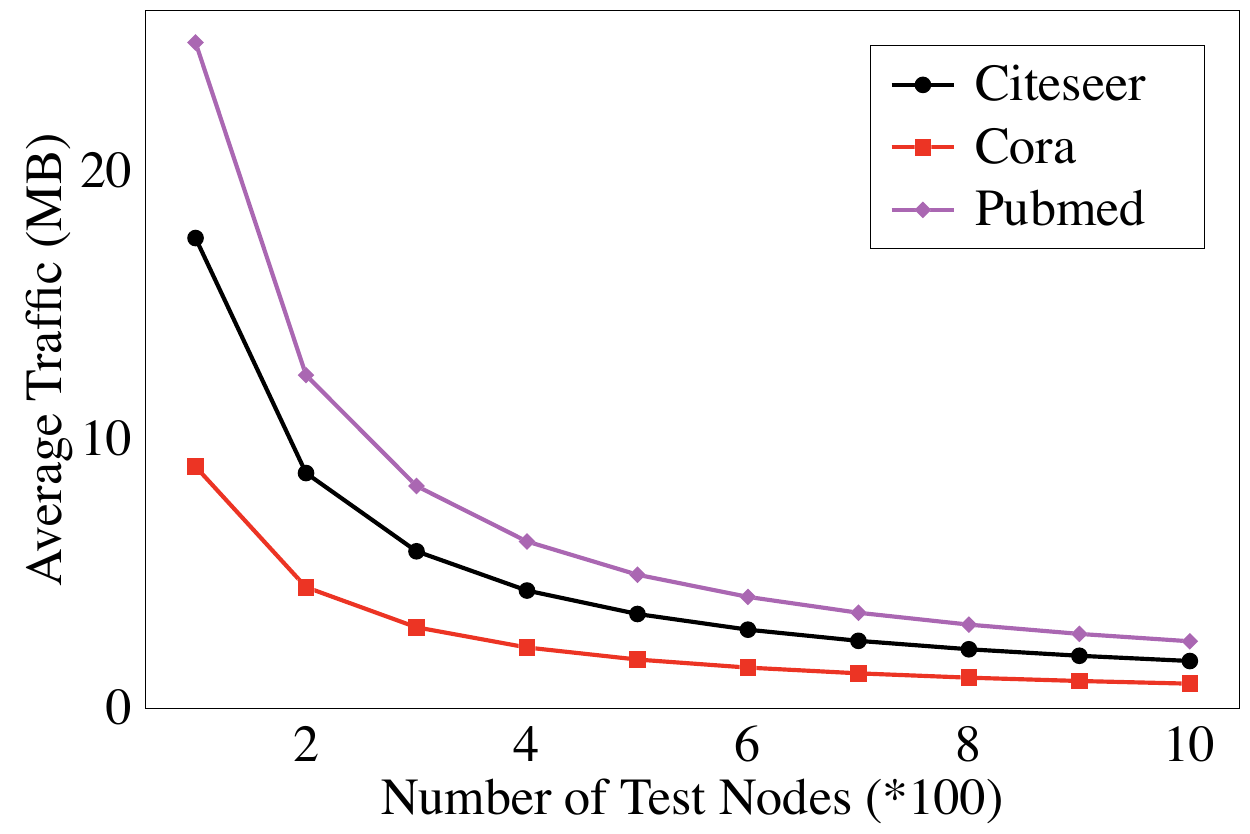}}
		\caption{Amortized traffic of secure inference, as we vary the number of test nodes.}
		\label{fig:infer_comm}
	\end{minipage}
\vspace{-10pt}
\end{figure}

\noindent\textbf{Computation and communication performance.} We examine the computation and communication performance of secure inference in SecGNN. 
Table \ref{Tab:tra_infer_overhead} shows the cost of inference for a single unlabeled node. Over the three tested datasets, the online end-to-end runtime of the sophisticated secure GNN inference for a single unlabeled node in SecGNN varies from 25.3 seconds to 69.6 seconds, with the online communication traffic ranging from 0.4 GB to 1 GB.

It is worth noting that the average cost of inferring a node's label decreases as the number of test nodes increases. That is because in secure inference, to calculate the encrypted $1_{st}$-layer aggregate state (i.e., Eq. \ref{eq:aggre1}) for a single unlabeled node, the cloud servers must calculate the encrypted $1_{st}$-layer state for all nodes since the cloud servers do not hold the IDs of the unlabeled node's neighboring nodes in plaintext. 
Therefore, if the cloud servers infer labels for a number of unlabeled nodes in a single batch, the cost can be amortized, so the average cost of individual node inference will go down. Fig. \ref{fig:infer_time} and Fig. \ref{fig:infer_comm} show the average time and communication cost with varying number of test nodes for inference.

\vspace{-10pt}
\subsection{Performance Benchmarks on Sub-Protocols}

In this section, we will first demonstrate the performance advantage of our proposed secure array access protocol over the state-of-the-art \cite{blanton2020improved} (referred to as the BYK20 protocol hereafter). After that, we evaluate the performance of secure MSB extraction which is used in secure activation functions.

\begin{table}[t]
	\normalsize 
	\centering
	\caption{Performance Comparison of Secure Array Access}
	\label{Tab:comp_array}
	\begin{tabular*}{\hsize}{@{}@{\extracolsep{\fill}}cccc@{}}
		\toprule
		& & Time (seconds)&Comm. (GB)\\ \midrule
		
		\multirow{2}{*}{Citeseer} &
		BYK20 \cite{blanton2020improved} & 22.4& 0.37\\ 
		\cline{2-4}
		&Ours& \textbf{17.9} & \textbf{0.19} \\ \hline
		
		\multirow{2}{*}{Cora} &
		BYK20 \cite{blanton2020improved}  & 9.8&0.12 \\ 
		\cline{2-4}
		&Ours &\textbf{8.1} & \textbf{0.06} \\ \hline
		
		\multirow{2}{*}{Pubmed} &
		BYK20 \cite{blanton2020improved}  &27& 0.29\\ 
		\cline{2-4}
		&Ours&  \textbf{23}& \textbf{0.15} \\ \hline
	\end{tabular*}
\vspace{-10pt}
\end{table}

\begin{figure}[t!]
	\centering
	\includegraphics[width=0.5\linewidth]{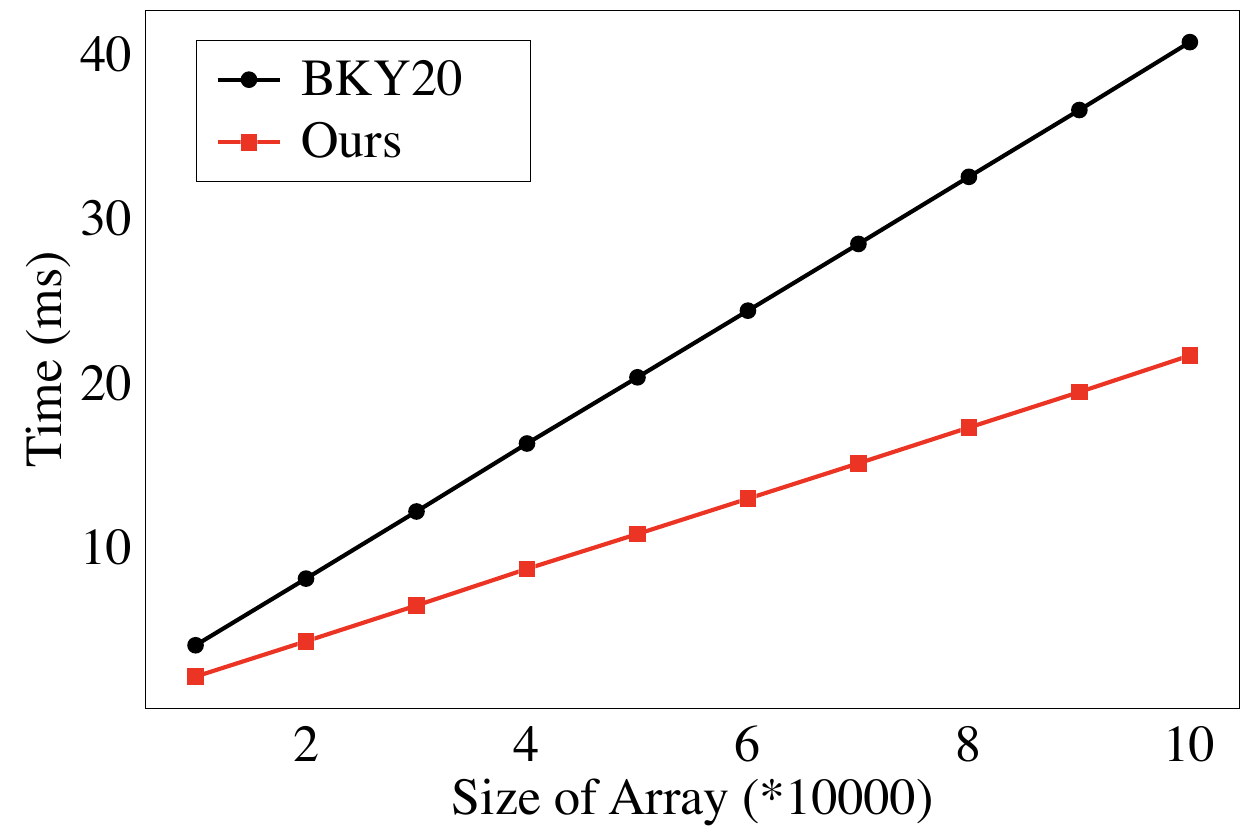}
	\caption{Runtime comparison of secure array access.}
\label{fig:array_access}
\vspace{-10pt}
\end{figure}

\noindent\textbf{Secure array access.} To demonstrate the performance advantage of our secure array access protocol over the BYK20 protocol, we evaluate the cost of securely accessing a node's feature vector from an encrypted array, where each array element is a graph node's feature vector. The size of the encrypted array is $N\times L$,
where $N$ is the number of graph nodes and $L$ is the length of each node's feature vector. 
The runtime costs are provided in Table \ref{Tab:comp_array}. 
In addition, we further compare the runtime costs of our protocol and the BYK20 protocol, with varying array sizes.
The results are given plotted in Fig. \ref{fig:array_access}.
It is observed that the efficiency gain of our protocol over the BYK20 protocol increases as the array size grows.

\begin{table}[t]
	\normalsize 
	\centering
	\caption{Theoretical Communication Performance of Secure MSB Extraction ($k=64$)}
	\label{Tab:msb64}
	\begin{tabular*}{\hsize}{@{}@{\extracolsep{\fill}}cccc@{}}
		\toprule
		Scheme &Rounds&Online (bit)&Offline (bit) \\\midrule
		$\mathrm{ABY}^{3}$ \cite{mohassel2018aby3}& 7&\textbf{677}&\textbf{0}\\ \hline
		Ours &\textbf{6}&732&1026\\
		\bottomrule
	\end{tabular*}
\vspace{-10pt}
\end{table}

\begin{table}[t]
	\normalsize 
	\centering
	\caption{Runtime Comparison of Secure MSB Extraction (in ms)}
	\label{Tab:msbexp}
	\begin{tabular*}{\hsize}{@{}@{\extracolsep{\fill}}cccc@{}}
		\toprule
		$h\times k$ &$16\times 64$&$128\times 128$&$20\times 576$ \\\midrule
		$\mathrm{ABY}^{3}$ \cite{mohassel2018aby3}& 6.08&\textbf{74.33}&\textbf{52.72}\\ \hline
		Ours &\textbf{5.89}&74.55&52.81\\
		\bottomrule
	\end{tabular*}
	\vspace{-10pt}
\end{table}

\noindent\textbf{Secure MSB extraction.} We evaluate and compare the performance of secure MSB extraction with $\mathrm{ABY}^{3}$ \cite{mohassel2018aby3}.
Table \ref{Tab:msb64} gives a comparison on the theoretical communication complexity of secure MSB extraction between SecGNN and $\mathrm{ABY}^{3}$.
The protocol in SecGNN consumes one less round, at the cost of more communication bits.
Furthermore, we conduct experiments to compare the practical efficiency under different $h\times k$ settings: number of values $\times$ bit-length.
The results are given in Table \ref{Tab:msbexp}.
It is observed that our protocol has comparable performance to $\mathrm{ABY}^{3}$ \cite{mohassel2018aby3}, and is a bit more efficient with a small size setting ($16\times 64$). 
However, it is noted that different from our system, $\mathrm{ABY}^{3}$'s security design uses replicated secret sharing, which runs among three cloud servers and needs them to interact with each other throughout the process. In contrast, we provide an alternative design to evaluate the MSB under additive secret sharing, which requires only two cloud servers $P_{\{1,2\}}$ to interact online, while the third cloud server $P_3$ just provides necessary triples in offline phase.

\section{Conclusion}
\label{sec:conclusion}

In this paper, we design, implement, and evaluate {\secgnn}, the first system supporting privacy-preserving GNN training and inference as a cloud service. Building on lightweight cryptographic techniques and a multi-server decentralized-trust setting, {\secgnn} can effectively allow the cloud servers to train a GNN model without seeing the graph data as well as provide secure inference service once the encrypted GNN model is trained. Extensive experiments on real-world datasets demonstrate that SecGNN achieves comparable plaintext training as well as inference accuracy, with practically affordable performance on the cloud. For future work, it would be interesting to explore how to extend our initial research effort to support secure GNN training and inference under a stronger active adversary model, as well as the possibility of leveraging the recent advances in trusted hardware for performance speedup.

\section*{Acknowledgement}

This work was supported in part by the Guangdong Basic and Applied Basic Research Foundation under Grant No. 2021A1515110027, and in part by the Shenzhen Science and Technology Program under Grants No. RCBS20210609103056041 and No. JCYJ20220531095416037.

\bibliographystyle{IEEEtran}
\bibliography{ref}

\begin{thebibliography}{10}
\providecommand{\url}[1]{#1}
\csname url@samestyle\endcsname
\providecommand{\newblock}{\relax}
\providecommand{\bibinfo}[2]{#2}
\providecommand{\BIBentrySTDinterwordspacing}{\spaceskip=0pt\relax}
\providecommand{\BIBentryALTinterwordstretchfactor}{4}
\providecommand{\BIBentryALTinterwordspacing}{\spaceskip=\fontdimen2\font plus
\BIBentryALTinterwordstretchfactor\fontdimen3\font minus
  \fontdimen4\font\relax}
\providecommand{\BIBforeignlanguage}[2]{{%
\expandafter\ifx\csname l@#1\endcsname\relax
\typeout{** WARNING: IEEEtran.bst: No hyphenation pattern has been}%
\typeout{** loaded for the language `#1'. Using the pattern for}%
\typeout{** the default language instead.}%
\else
\language=\csname l@#1\endcsname
\fi
#2}}
\providecommand{\BIBdecl}{\relax}
\BIBdecl

\bibitem{zhang2019deep}
S.~Zhang, L.~Yao, A.~Sun, and Y.~Tay, ``Deep learning based recommender system:
  {A} survey and new perspectives,'' \emph{{ACM} Comput. Surv.}, vol.~52,
  no.~1, pp. 5:1--5:38, 2019.

\bibitem{kim2018social}
J.~Kim and M.~Hastak, ``Social network analysis: Characteristics of online
  social networks after a disaster,'' \emph{Int. J. Inf. Manag.}, vol.~38,
  no.~1, pp. 86--96, 2018.

\bibitem{scarselli2008graph}
F.~Scarselli, M.~Gori, A.~C. Tsoi, M.~Hagenbuchner, and G.~Monfardini, ``The
  graph neural network model,'' \emph{{IEEE} Trans. Neural Networks}, vol.~20,
  no.~1, pp. 61--80, 2009.

\bibitem{wu2021learning}
M.~Wu, S.~Pan, L.~Du, and X.~Zhu, ``Learning graph neural networks with
  positive and unlabeled nodes,'' \emph{{ACM} Trans. Knowl. Discov. Data},
  2021.

\bibitem{wu2020comprehensive}
Z.~Wu, S.~Pan, F.~Chen, G.~Long, C.~Zhang, and P.~S. Yu, ``A comprehensive
  survey on graph neural networks,'' \emph{{IEEE} Trans. Neural Networks Learn.
  Syst.}, vol.~32, no.~1, pp. 4--24, 2021.

\bibitem{ying2018graph}
R.~Ying, R.~He, K.~Chen, P.~Eksombatchai, W.~L. Hamilton, and J.~Leskovec,
  ``Graph convolutional neural networks for web-scale recommender systems,'' in
  \emph{Proc. of ACM KDD}, 2018.

\bibitem{kim2019edge}
J.~Kim, T.~Kim, S.~Kim, and C.~D. Yoo, ``Edge-labeling graph neural network for
  few-shot learning,'' in \emph{Proc. of IEEE CVPR}, 2019.

\bibitem{zhang2018link}
M.~Zhang and Y.~Chen, ``Link prediction based on graph neural networks,'' in
  \emph{Proc. of NeurIPS}, 2018.

\bibitem{aws}
{Amazon Web Services}, ``{What is amazon machine learning?}''
  \url{https://docs.aws.amazon.com/machine-learning/?id=docs_gateway}, 2021,
  [Online; Accessed 15-Jul-2021].

\bibitem{azure}
{Microsoft Azure}, ``{Azure AI: Make artificial intelligence real for your
  business today.}''
  \url{https://azure.microsoft.com/en-us/overview/ai-platform/}, 2021, [Online;
  Accessed 15-Jul-2021].

\bibitem{QinW0018}
Z.~Qin, J.~Weng, Y.~Cui, and K.~Ren, ``Privacy-preserving image processing in
  the cloud,'' \emph{{IEEE} Cloud Comput.}, vol.~5, no.~2, pp. 48--57, 2018.

\bibitem{JiangWHWLSR21}
P.~Jiang, Q.~Wang, M.~Huang, C.~Wang, Q.~Li, C.~Shen, and K.~Ren, ``Building
  in-the-cloud network functions: Security and privacy challenges,''
  \emph{Proceedings of the IEEE}, vol. 109, no.~12, pp. 1888--1919, 2021.

\bibitem{gilad2016cryptonets}
R.~Gilad-Bachrach, N.~Dowlin, K.~Laine, K.~Lauter, M.~Naehrig, and J.~Wernsing,
  ``Cryptonets: Applying neural networks to encrypted data with high throughput
  and accuracy,'' in \emph{Proc. of ICML}, 2016.

\bibitem{liu2017oblivious}
J.~Liu, M.~Juuti, Y.~Lu, and N.~Asokan, ``Oblivious neural network predictions
  via minionn transformations,'' in \emph{Proc. of ACM CCS}, 2017.

\bibitem{juvekar2018gazelle}
C.~Juvekar, V.~Vaikuntanathan, and A.~Chandrakasan, ``{GAZELLE:} {A} low
  latency framework for secure neural network inference,'' in \emph{Proc. of
  USENIX Security Symposium}, 2018.

\bibitem{riazi2019xonn}
M.~S. Riazi, M.~Samragh, H.~Chen, K.~Laine, K.~E. Lauter, and F.~Koushanfar,
  ``{XONN:} xnor-based oblivious deep neural network inference,'' in
  \emph{Proc. of USENIX Security Symposium}, 2019.

\bibitem{chaudhari2019astra}
H.~Chaudhari, A.~Choudhury, A.~Patra, and A.~Suresh, ``Astra: high throughput
  3pc over rings with application to secure prediction,'' in \emph{Proc. of ACM
  CCS}, 2019.

\bibitem{patra2020blaze}
A.~Patra and A.~Suresh, ``{BLAZE:} blazing fast privacy-preserving machine
  learning,'' in \emph{Proc. of NDSS}, 2020.

\bibitem{mishra2020delphi}
P.~Mishra, R.~Lehmkuhl, A.~Srinivasan, W.~Zheng, and R.~A. Popa, ``Delphi: A
  cryptographic inference service for neural networks,'' in \emph{Proc. of
  USENIX Security Symposium}, 2020.

\bibitem{kumar2020cryptflow}
N.~Kumar, M.~Rathee, N.~Chandran, D.~Gupta, A.~Rastogi, and R.~Sharma,
  ``Cryptflow: Secure tensorflow inference,'' in \emph{Proc. of IEEE S\&P},
  2020.

\bibitem{rathee2020cryptflow2}
D.~Rathee, M.~Rathee, N.~Kumar, N.~Chandran, D.~Gupta, A.~Rastogi, and
  R.~Sharma, ``Cryptflow2: Practical 2-party secure inference,'' in \emph{Proc.
  of ACM CCS}, 2020.

\bibitem{mohassel2017secureml}
P.~Mohassel and Y.~Zhang, ``Secureml: A system for scalable privacy-preserving
  machine learning,'' in \emph{Proc. of IEEE S\&P}, 2017.

\bibitem{mohassel2018aby3}
P.~Mohassel and P.~Rindal, ``{ABY}\({}^{\mbox{3}}\): {A} mixed protocol
  framework for machine learning,'' in \emph{Proc. of ACM CCS}, 2018.

\bibitem{wagh2019securenn}
S.~Wagh, D.~Gupta, and N.~Chandran, ``Securenn: 3-party secure computation for
  neural network training,'' \emph{PoPETs}, vol. 2019, no.~3, pp. 26--49, 2019.

\bibitem{wagh2020falcon}
S.~Wagh, S.~Tople, F.~Benhamouda, E.~Kushilevitz, P.~Mittal, and T.~Rabin,
  ``Falcon: Honest-majority maliciously secure framework for private deep
  learning,'' \emph{PoPETs}, vol. 2021, no.~1, pp. 188--208, 2021.

\bibitem{tan2021cryptgpu}
S.~Tan, B.~Knott, Y.~Tian, and D.~J. Wu, ``Cryptgpu: Fast privacy-preserving
  machine learning on the gpu,'' in \emph{Proc. of IEEE S\&P}, 2021.

\bibitem{MohasselRR20}
P.~Mohassel, P.~Rindal, and M.~Rosulek, ``Fast database joins and {PSI} for
  secret shared data,'' in \emph{Proc. of ACM CCS}, 2020.

\bibitem{Mozilla}
{Mozilla Security Blog}, ``{Next steps in privacy-preserving Telemetry with
  Prio.}'' online at \url{
  https://blog.mozilla.org/security/2019/06/06/next-steps-in-privacy-preserving-telemetry-with-prio/},
  2019, [Online; Accessed 1-Nov-2022].

\bibitem{WhitePaper}
{Apple and Google}, ``{Exposure Notification Privacy-preserving Analytics
  (ENPA) White Paper},'' online at \url{
  https://covid19-static.cdn-apple.com/applications/covid19/current/static/contact-tracing/pdf/ENPA_White_Paper.pdf},
  2021, [Online; Accessed 1-Nov-2022].

\bibitem{GGNN}
Y.~Li, D.~Tarlow, M.~Brockschmidt, and R.~S. Zemel, ``Gated graph sequence
  neural networks,'' in \emph{Proc. of ICLR}, 2016.

\bibitem{GCN}
T.~N. Kipf and M.~Welling, ``Semi-supervised classification with graph
  convolutional networks,'' in \emph{Proc. of ICLR}, 2017.

\bibitem{GAT}
P.~Velickovic, G.~Cucurull, A.~Casanova, A.~Romero, P.~Li{\`{o}}, and
  Y.~Bengio, ``Graph attention networks,'' in \emph{Proc. of ICLR}, 2018.

\bibitem{chen2018rise}
H.~Chen, O.~Engkvist, Y.~Wang, M.~Olivecrona, and T.~Blaschke, ``The rise of
  deep learning in drug discovery,'' \emph{Drug discovery today}, vol.~23,
  no.~6, pp. 1241--1250, 2018.

\bibitem{wang2019exploring}
H.~Wang, F.~Zhang, J.~Wang, M.~Zhao, W.~Li, X.~Xie, and M.~Guo, ``Exploring
  high-order user preference on the knowledge graph for recommender systems,''
  \emph{{ACM} Trans. Inf. Syst.}, vol.~37, no.~3, pp. 32:1--32:26, 2019.

\bibitem{meng2021cross}
C.~Meng, S.~Rambhatla, and Y.~Liu, ``Cross-node federated graph neural network
  for spatio-temporal data modeling,'' in \emph{Proc. of ACM KDD}, 2021, pp.
  1202--1211.

\bibitem{wu2021fedgnn}
C.~Wu, F.~Wu, Y.~Cao, Y.~Huang, and X.~Xie, ``{FedGNN}: Federated graph neural
  network for privacy-preserving recommendation,'' in \emph{International
  Workshop on Federated Learning for User Privacy and Data Confidentiality},
  2021.

\bibitem{chen2021fedgraph}
F.~Chen, P.~Li, T.~Miyazaki, and C.~Wu, ``Fedgraph: Federated graph learning
  with intelligent sampling,'' \emph{{IEEE} Trans. Parallel Distributed Syst.},
  vol.~33, no.~8, pp. 1775--1786, 2021.

\bibitem{chen2022vertically}
C.~Chen, J.~Zhou, L.~Zheng, H.~Wu, L.~Lyu, J.~Wu, B.~Wu, Z.~Liu, L.~Wang, and
  X.~Zheng, ``Vertically federated graph neural network for privacy-preserving
  node classification,'' in \emph{Proc. of IJCAI}, 2022.

\bibitem{jiang2022federated}
M.~Jiang, T.~Jung, R.~Karl, and T.~Zhao, ``Federated dynamic graph neural
  networks with secure aggregation for video-based distributed surveillance,''
  \emph{{ACM} Trans. Intell. Syst. Technol.}, vol.~13, no.~4, pp. 1--23, 2022.

\bibitem{pei2021decentralized}
Y.~Pei, R.~Mao, Y.~Liu, C.~Chen, S.~Xu, F.~Qiang, and B.~E. Tech,
  ``Decentralized federated graph neural networks,'' in \emph{International
  Workshop on Federated and Transfer Learning for Data Sparsity and
  Confidentiality in Conjunction with IJCAI}, 2021.

\bibitem{bonawitz2017practical}
K.~Bonawitz, V.~Ivanov, B.~Kreuter, A.~Marcedone, H.~B. McMahan, S.~Patel,
  D.~Ramage, A.~Segal, and K.~Seth, ``Practical secure aggregation for
  privacy-preserving machine learning,'' in \emph{Proc. of ACM CCS}, 2017.

\bibitem{sajadmanesh2021locally}
S.~Sajadmanesh and D.~Gatica-Perez, ``Locally private graph neural networks,''
  in \emph{Proc. of ACM CCS}, 2021.

\bibitem{miao2022p}
X.~Miao, W.~Zhang, Y.~Jiang, F.~Fu, Y.~Shao, L.~Chen, Y.~Tao, G.~Cao, and
  B.~Cui, ``P$^2${CG}: a privacy preserving collaborative graph neural network
  training framework,'' \emph{The VLDB Journal}, pp. 1--20, 2022.

\bibitem{dwork2006differential}
C.~Dwork, ``Differential privacy,'' in \emph{Proc. of {ICALP}}, 2006.

\bibitem{wang2021privacy}
B.~Wang, J.~Guo, A.~Li, Y.~Chen, and H.~Li, ``Privacy-preserving representation
  learning on graphs: A mutual information perspective,'' in \emph{Proc. of ACM
  KDD}, 2021.

\bibitem{rodriguez2018beyond}
P.~Rodr{\'{\i}}guez, M.~{\'{A}}. Bautista, J.~Gonz{\`{a}}lez, and S.~Escalera,
  ``Beyond one-hot encoding: Lower dimensional target embedding,'' \emph{Image
  Vis. Comput.}, vol.~75, pp. 21--31, 2018.

\bibitem{nair2010rectified}
V.~Nair and G.~E. Hinton, ``Rectified linear units improve restricted boltzmann
  machines,'' in \emph{Proc. of ICML}, 2010.

\bibitem{liu2016large}
W.~Liu, Y.~Wen, Z.~Yu, and M.~Yang, ``Large-margin softmax loss for
  convolutional neural networks.'' in \emph{Proc. of ICML}, 2016.

\bibitem{ZhengDW18}
Y.~Zheng, H.~Duan, and C.~Wang, ``Learning the truth privately and confidently:
  Encrypted confidence-aware truth discovery in mobile crowdsensing,''
  \emph{{IEEE} Transactions on Information Forensics and Security}, vol.~13,
  no.~10, pp. 2475--2489, 2018.

\bibitem{chen2020metal}
W.~Chen and R.~A. Popa, ``Metal: {A} metadata-hiding file-sharing system,'' in
  \emph{Proc. of {NDSS}}, 2020.

\bibitem{dauterman2020dory}
E.~Dauterman, E.~Feng, E.~Luo, R.~A. Popa, and I.~Stoica, ``{DORY:} an
  encrypted search system with distributed trust,'' in \emph{Proc. of OSDI},
  2020.

\bibitem{araki2021secure}
T.~Araki, J.~Furukawa, K.~Ohara, B.~Pinkas, H.~Rosemarin, and H.~Tsuchida,
  ``Secure graph analysis at scale,'' in \emph{Proc. of ACM CCS}, 2021.

\bibitem{boneh2021lightweight}
D.~Boneh, E.~Boyle, H.~Corrigan{-}Gibbs, N.~Gilboa, and Y.~Ishai, ``Lightweight
  techniques for private heavy hitters,'' in \emph{Proc. of IEEE S\&P}, 2021.

\bibitem{dauterman2022waldo}
E.~Dauterman, M.~Rathee, R.~A. Popa, and I.~Stoica, ``Waldo: {A} private
  time-series database from function secret sharing,'' in \emph{Proc. of IEEE
  S\&P}, 2022.

\bibitem{wang2022privacy}
S.~Wang, Y.~Zheng, X.~Jia, and X.~Yi, ``Privacy-preserving analytics on
  decentralized social graphs: The case of eigendecomposition,'' \emph{{IEEE}
  Trans. Knowl. Data Eng.}, 2022, 10.1109/TKDE.2022.3185079.

\bibitem{bell2022distributed}
J.~Bell, A.~Gascon, B.~Ghazi, R.~Kumar, P.~Manurangsi, M.~Raykova, and
  P.~Schoppmann, ``Distributed, private, sparse histograms in the two-server
  model,'' in \emph{Proc. of ACM CCS}, 2022.

\bibitem{wang2022oblivgm}
S.~Wang, Y.~Zheng, X.~Jia, H.~Huang, and C.~Wang, ``{OblivGM}: Oblivious
  attributed subgraph matching as a cloud service,'' \emph{{IEEE} Trans. Inf.
  Forensics Secur.}, vol.~17, pp. 3582--3596, 2022.

\bibitem{zheng2022secskyline}
Y.~Zheng, W.~Wang, S.~Wang, X.~Jia, H.~Huang, and C.~Wang, ``{SecSkyline}: Fast
  privacy-preserving skyline queries over encrypted cloud databases,''
  \emph{{IEEE} Trans. Knowl. Data Eng.}, 2022, 10.1109/TKDE.2022.3220595.

\bibitem{wang2022PeGraph}
S.~Wang, Y.~Zheng, X.~Jia, and X.~Yi, ``{PeGraph}: A system for
  privacy-preserving and efﬁcient search over encrypted social graphs,''
  \emph{{IEEE} Trans. Inf. Forensics Secur.}, vol.~17, pp. 3179--3194, 2022.

\bibitem{WangWHZR16}
Q.~Wang, J.~Wang, S.~Hu, Q.~Zou, and K.~Ren, ``Sechog: Privacy-preserving
  outsourcing computation of histogram of oriented gradients in the cloud,'' in
  \emph{Proc. of ACM AsiaCCS}, 2016.

\bibitem{chun2014outsourceable}
H.~Chun, Y.~Elmehdwi, F.~Li, P.~Bhattacharya, and W.~Jiang, ``Outsourceable
  two-party privacy-preserving biometric authentication,'' in \emph{Proc. of
  ACM AsiaCCS}, 2014.

\bibitem{qin2014towards}
Z.~Qin, J.~Yan, K.~Ren, C.~W. Chen, and C.~Wang, ``Towards efficient
  privacy-preserving image feature extraction in cloud computing,'' in
  \emph{Proc. of ACM AsiaCCS}, 2014.

\bibitem{glorot2010understanding}
X.~Glorot and Y.~Bengio, ``Understanding the difficulty of training deep
  feedforward neural networks,'' in \emph{Proc. of AISTATS}, 2010.

\bibitem{knott2020crypten}
B.~Knott, S.~Venkataraman, A.~Hannun, S.~Sengupta, M.~Ibrahim, and L.~van~der
  Maaten, ``Crypten: Secure multi-party computation meets machine learning,''
  in \emph{Proc. of NeurIPS}, 2021.

\bibitem{akram2015newton}
S.~Akram and Q.~U. Ann, ``Newton raphson method,'' \emph{International Journal
  of Scientific \& Engineering Research}, vol.~6, no.~7, pp. 1748--1752, 2015.

\bibitem{blanton2020improved}
M.~Blanton, A.~Kang, and C.~Yuan, ``Improved building blocks for secure
  multi-party computation based on secret sharing with honest majority,'' in
  \emph{Proc. of ACNS}, 2020.

\bibitem{araki2016high}
T.~Araki, J.~Furukawa, Y.~Lindell, A.~Nof, and K.~Ohara, ``High-throughput
  semi-honest secure three-party computation with an honest majority,'' in
  \emph{Proc. of ACM CCS}, 2016.

\bibitem{LiuZYY21}
X.~Liu, Y.~Zheng, X.~Yuan, and X.~Yi, ``Medisc: Towards secure and lightweight
  deep learning as a medical diagnostic service,'' in \emph{Proc. of ESORICS},
  2021.

\bibitem{harris2003taxonomy}
D.~Harris, ``A taxonomy of parallel prefix networks,'' in \emph{Proc. of IEEE
  ACSSC}, 2003.

\bibitem{AalstRVDKG10}
W.~M.~P. van~der Aalst, V.~A. Rubin, H.~M.~W. Verbeek, B.~F. van Dongen,
  E.~Kindler, and C.~W. G{\"{u}}nther, ``Process mining: a two-step approach to
  balance between underfitting and overfitting,'' \emph{Softw. Syst. Model.},
  vol.~9, no.~1, pp. 87--111, 2010.

\bibitem{he2012geometrical}
J.-H. He, S.~Elagan, and Z.~Li, ``Geometrical explanation of the fractional
  complex transform and derivative chain rule for fractional calculus,''
  \emph{Physics letters A}, vol. 376, no.~4, pp. 257--259, 2012.

\bibitem{curran2019procsa}
M.~Curran, X.~Liang, H.~Gupta, O.~Pandey, and S.~R. Das, ``Procsa: Protecting
  privacy in crowdsourced spectrum allocation,'' in \emph{Proc. of ESORICS},
  2019.

\bibitem{zhang2022model}
Z.~Zhang, Q.~Liu, Z.~Huang, H.~Wang, C.-K. Lee, and E.~Chen, ``Model inversion
  attacks against graph neural networks,'' \emph{{IEEE} Trans. Knowl. Data
  Eng.}, 2022.

\bibitem{kasiviswanathan2011can}
S.~P. Kasiviswanathan, H.~K. Lee, K.~Nissim, S.~Raskhodnikova, and A.~D. Smith,
  ``What can we learn privately?'' \emph{{SIAM} Journal on Computing}, vol.~40,
  no.~3, pp. 793--826, 2011.

\end{thebibliography}

\end{document}